%% file: main.tex
\documentclass[sigconf, 10pt]{acmart}

\AtBeginDocument{%
  }

\usepackage{balance} 
\usepackage{etoolbox}
\usepackage{stfloats}
\usepackage{placeins}
\usepackage{float}
\usepackage{color}
\usepackage{subfigure}
\usepackage{subcaption}
\usepackage{amsmath}
\usepackage{array}
\usepackage{tabularx}
\usepackage{graphicx}
\usepackage[table]{xcolor}
\usepackage{hyperref}
\usepackage{booktabs}
\usepackage{makecell}
\usepackage{enumitem}
\usepackage{caption}
\usepackage{multirow}
\usepackage{natbib}
\usepackage{tikz}
\usepackage{adjustbox}
\usepackage[linesnumbered, boxed,ruled,vlined]{algorithm2e}
\usepackage[shortcuts]{extdash}
\usepackage{amsthm}
\usepackage{balance}

\definecolor{slotcolor}{RGB}{180,60,60}
\usepackage{listings}
\usepackage{xcolor}
\lstdefinelanguage{STP}{
    keywords={define, function, input, return, if, else, while, for, each, in, iterate, over, match, break, continue, record, set, as, run, execute, with, append, sort, merge, query, add, to},
    keywordstyle=\color{blue}\bfseries,
    ndkeywords={true, false, null},
    ndkeywordstyle=\color{purple}\bfseries,
    identifierstyle=\color{black},
    sensitive=false,
    comment=[l]{\#},
    commentstyle=\color{gray}\itshape,
    stringstyle=\color{red},
    morestring=[b]",
    morestring=[b]',
}

\lstdefinestyle{prompt}{
    language={},
    basicstyle=\ttfamily\footnotesize,
    frame=single,
    breaklines=true,
    columns=fullflexible,
    keepspaces=true,
    numbers=none,
    commentstyle=\color{gray}\itshape,
    stringstyle=\color{red!70!black},
    mathescape=false,
    captionpos=b
}

\newcommand{\systemname}{MANA\xspace}
\newcommand{\systemnames}{MANA's\xspace}
\newcommand{\datasetname}{MANAZoo\xspace}

\newcommand{\ie}{\textit{i}.\textit{e}.\xspace}
\newcommand{\eg}{\textit{e}.\textit{g}.\xspace}

\newcommand*\circlednum[1]{%
  \tikz[baseline=(char.base)]{
    \node[shape=circle,fill=black,text=white,inner sep=1pt] (char) {#1};
  }%
}

\newcommand{\red}[1]{\iffalse #1 \fi}
\newcommand{\blue}[1]{\textcolor{black}{#1}}

\patchcmd{\abstractname}{Abstract}{ABSTRACT}{}{}
\begin{document}

\title{\systemname: Towards Efficient Mobile Ad Detection via Multimodal Agentic UI Navigation}
\author{Yizhe Zhao$^{\ddag}$, Yongjian Fu$^{\ddag \diamond}$, Zihao Feng$^{\dag}$, Hao Pan$^{\S}$, Yongheng Deng$^{\ddag}$, Yaoxue Zhang$^{\ddag}$, Ju Ren$^{\ddag \P \diamond}$}
\thanks{$^{\diamond}$ Corresponding author.}

 \affiliation{$^{\ddag}$Department of Computer Science and Technology, Tsinghua University \country{China}, \\
 $^{\dag}$University of Southern California, USA, $^{\S}$ Shanghai Jiao Tong University, China, \\
 $^{\P}$ State Key Laboratory of Internet Architecture, Tsinghua University, China, \\
$^{\ddag}$zhao-yz24@mails.tsinghua.edu.cn, 
$^{\ddag}$\{fuyongjian,dyh2024,zhangyx,renju\}@tsinghua.edu.cn,
$^{\dag}$zihaofen@usc.edu, 
$^{\S}$panh09@sjtu.edu.cn
}
\email{}



\settopmatter{printacmref=false}
\settopmatter{printfolios=false}
\renewcommand\footnotetextcopyrightpermission[1]{} 

\input{chapter/abstract}

\begin{CCSXML}
<ccs2012>
   <concept>
       <concept_id>10003120.10003138</concept_id>
       <concept_desc>Human-centered computing~Ubiquitous and mobile computing</concept_desc>
       <concept_significance>500</concept_significance>
       </concept>
   <concept>
       <concept_id>10010147.10010178</concept_id>
       <concept_desc>Computing methodologies~Artificial intelligence</concept_desc>
       <concept_significance>500</concept_significance>
       </concept>
   <concept>
       <concept_id>10002978.10003022</concept_id>
       <concept_desc>Security and privacy~Software and application security</concept_desc>
       <concept_significance>500</concept_significance>
       </concept>
 </ccs2012>
\end{CCSXML}

\ccsdesc[500]{Human-centered computing~Ubiquitous and mobile computing}
\ccsdesc[500]{Computing methodologies~Artificial intelligence}
\ccsdesc[500]{Security and privacy~Software and application security}

\keywords{Mobile Advertising Detection, Large Language Models, Multimodal Agents}

\renewcommand{\shortauthors}{Zhao, et al}

\maketitle

\pagestyle{plain}
\input{chapter/introduction}
\input{chapter/background}

\input{chapter/design}
\input{chapter/implementation}
\input{chapter/evaluation}

\input{chapter/related_work}
\input{chapter/discussion}
\input{chapter/conclusion}
\input{chapter/acknowledgement}

\bibliographystyle{ACM-Reference-Format}
\bibliography{reference}
\input{chapter/appendix}
\end{document}

%% file: chapter/abstract.tex
\begin{abstract}
Mobile advertising dominates app monetization but introduces risks ranging from intrusive user experience to malware delivery. 
Existing detection methods rely either on static analysis, which misses runtime behaviors, or on heuristic UI exploration, which struggles with sparse and obfuscated ads. 
In this paper, we present \systemname, the first agentic multimodal reasoning framework for mobile ad detection. 
\systemname integrates static, visual, temporal, and experiential signals into a reasoning-guided navigation strategy that determines not only how to traverse interfaces but also where to focus, enabling efficient and robust exploration. 
We implement and evaluate \systemname on commercial smartphones over $200$ apps, achieving state-of-the-art accuracy and efficiency. 
Compared to baselines, it improves detection accuracy by $30.5\%$–$56.3\%$ and reduces exploration steps by $29.7\%$–$63.3\%$. 
Case studies further demonstrate its ability to uncover obfuscated and malicious ads, underscoring its practicality for mobile ad auditing and its potential for broader runtime UI analysis~(\eg, permission abuse). Code and dataset are available at \url{https://github.com/MANA-2026/MANA}.

\end{abstract}

%% file: chapter/introduction.tex
\section{Introduction}
Mobile advertising has become the economic backbone of the mobile ecosystem, with a global market projected to surpass USD 1 trillion by 2032~\cite{choudhury2025mobile}. 
Recent reports show that ads appear in over 60\% of Google Play apps~\cite{9052472}, most commonly delivered through third-party SDKs such as Google AdMob~\cite{AdMob2024}, Meta Audience Network~\cite{FacebookAudienceNetwork}, and Applovin~\cite{AppLovin2024}.
However, this prevalence introduces substantial risks. 
Ads may disrupt interaction with intrusive pop-ups~\cite{hanbazazh2021pop}, simulate native interface elements through deceptive design~\cite{zha2014impact}, or be weaponized to deliver malware and trigger malicious redirects~\cite{subramani2020push}.
In the current environment, ads are not merely peripheral annoyances but instead constitute a significant attack surface with direct implications for both user security and platform integrity.
Consequently, the ability to automatically and reliably detect in-app ads is no longer a matter of mere convenience, but a critical requirement for vetting apps at scale and securing the mobile ecosystem.

Pioneering efforts in mobile ad detection have progressed along two main directions. 
Static analysis~\cite{lee2019adlib,feldman2014manilyzer,lee2016hybridroid,xinyu2023andetect} inspects manifests, layouts, or bytecode to identify ad SDK signatures. 
While scalable, this approach offers no visibility into when or where ads actually appear at runtime and is easily defeated by obfuscation or dynamic code loading.
To address these limitations, dynamic exploration driven by automated UI navigation has become the dominant paradigm.
Tools such as Monkey~\cite{monkey2023} rely on random event generation, DroidBot~\cite{li2017droidbot} maintains a lightweight state model, and systems like MadDroid~\cite{liu2020maddroid} and ADGPE~\cite{ma2024careful} prioritize UI components based on keywords or class names, while FraudDroid~\cite{dong2018frauddroid} combines exploration with network traffic analysis.
Despite improving runtime visibility, these methods are fundamentally limited by their reliance on narrow, single-source signals (\eg, view metadata).
%
Their navigation strategies remain brittle and inefficient, often cycling through shallow states while failing to uncover ads hidden in deeper, more complex interaction flows.

\begin{figure}[!t]
\hspace*{0.2cm}
\includegraphics[width=0.46\textwidth]{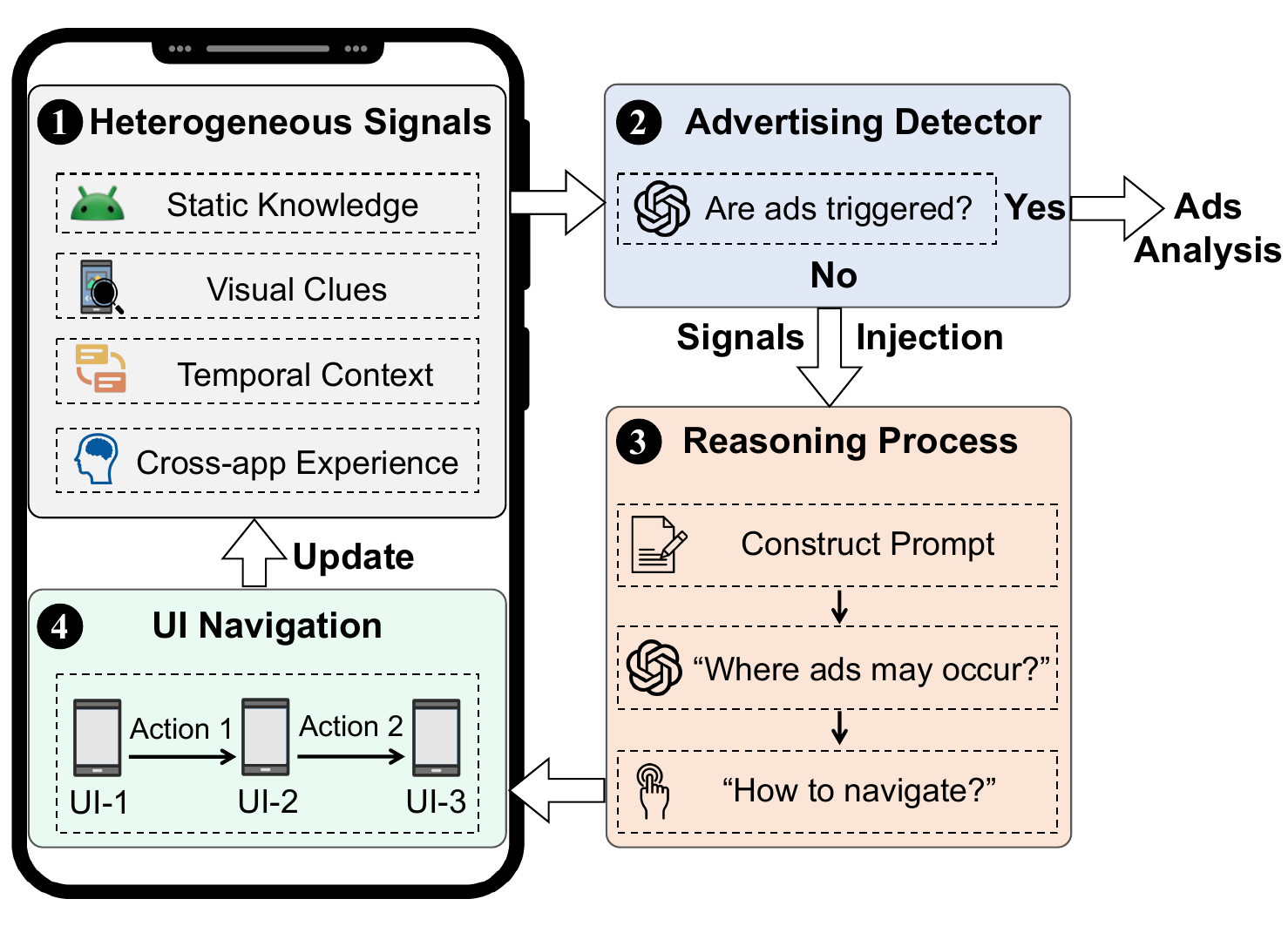}
  \vspace{-0.3cm}
  \caption{\systemname as a multimodal reasoning agent for efficient mobile ad detection, fusing heterogeneous signals to jointly reason about ``where and how to go''.}
  \label{fig:scenario}
\end{figure}
This limitation becomes critical as modern apps adopt evasive patterns that break metadata-dependent heuristics.
UI metadata is often missing, ambiguous, or bypassed, as seen with indistinguishable \texttt{ImageView} elements or in apps (\eg, games) that use full canvas rendering, concealing the entire widget hierarchy.
%
Such practices underscore the need for a paradigm shift: from brittle, single-signal heuristics to a robust reasoning framework that can fuse heterogeneous signals to guide exploration.
Therefore, \textit{how to turn diverse and often incomplete runtime evidence into coherent guidance for efficiently exposing hidden ads remains an open challenge}.

In this paper, we propose \systemname, a \underline{M}ultimodal \underline{A}gentic UI \underline{N}avigation framework for mobile \underline{A}d detection that operationalizes this paradigm shift, as depicted in Figure~\ref{fig:scenario}.
%
\systemname moves beyond simple heuristics to act as a reasoning agent that fuses static footprints, visual cues, temporal context and cross-app experience to steer exploration toward high-yield states.
By leveraging large language and vision\-/language models (LLMs/VLMs)~\cite{openai2023gpt4v,li2024llava}, it interprets ambiguous and dynamically evolving heterogeneous signals at runtime, enabling robust and efficient discovery of ads concealed in diverse interaction flows.
\systemname contributes to agentic UI navigation by shifting the focus from goal-oriented task completion~\cite{liu2024make, zhao2025llm, wen2024autodroid} to adversarial exploratory search.
Generic agents excel at explicit task completion (\eg, ``add item to cart''), where the goal is clear and targets are structurally exposed. 
In contrast, ad detection is an \textit{exploratory search problem}: the targets are hidden, sparsely distributed, and triggered by non-obvious event sequences.
This challenge is not just ``\textit{how to go}'' but also ``\textit{where to go}'' when the destination is unknown.
This necessitates a reasoning framework over heterogeneous signals, where the model synthesizes static footprints with dynamic cues to infer latent UI states, addressing a form of ``detecting the unseen'' that extends beyond standard LLM-based navigation heuristics.




To realize this vision, our design overcomes three fundamental challenges.

First, \textbf{addressing the cold start problem}. 
Automated exploration starts without prior knowledge of ad locations, resulting in inefficient and budget-wasting searches across the vast state space.
To mitigate this, we conduct a large-scale empirical analysis of advertising apps, revealing that ad-hosting structures consistently leave traces in manifests, layouts, and bytecode. 
Building on these insights, we introduce \textit{offline profiling}, which combines static analysis with lightweight probing to derive actionable priors (screen, slot, trigger, network) and to construct an initial User Transition Graph~(UTG). 
These priors seed navigation with structural knowledge, providing principled guidance from the outset.

Second, \textbf{overcoming fragmented observability}. 
%
Ad\-/related signals are scattered across modalities and time.
Metadata may be sparse, canvas-based UIs hide their structure, and visual cues are often subtle.
Furthermore, ads can be time-gated or require specific interaction sequences. 
Our insight is that observability must be reconstructed by fusing modalities and aligning them across time.
We therefore design \textit{multimodal reasoning–guided navigation}, which unifies metadata with visual cues, grounds exploration in temporal and structural context via the UTG and interaction history.
It further leverages LLM/VLM reasoning to interpret incomplete and dynamic signals into a coherent action plan, guiding not only how to navigate the interface but also where to focus in order to surface hidden ads.


Third, \textbf{bridging the cross-app knowledge gap}. 
Ad\-/triggering patterns often repeat across different apps, but traditional explorers learn nothing from past runs, repeatedly rediscovering the same strategies.
%
This failure to generalize leads to redundant effort and poor scalability.
Motivated by the observation that ad-triggering patterns exhibit strong structural and semantic regularities, we propose \textit{memory-driven runtime optimization}.
The system records successful ad-triggered trajectories as ``experiences'', abstracts them into reusable, semantically encoded strategies, and transfers this knowledge to new exploration contexts.
By learning from past successes, the agent accelerates discovery in new apps and makes more informed decisions about where to search and how to proceed.



We implement \systemname on commercial smartphones~(\ie, Honor V20 and Huawei Mate 40 Pro), leveraging heterogeneous signals for agentic multimodal reasoning. 
Evaluation on $200$ apps (\ie, ADGPE and a newly curated \datasetname) shows that \systemname achieves state-of-the-art performance, reaching detection rates of $77.0\%$ with improvements of $30.5\%$–$56.3\%$ over baselines and reducing redundant exploration paths by $29.7\%$–$63.3\%$. 
\systemname further adapts to diverse reasoning models (\ie, GPT-4o mini~\cite{hurst2024gpt}, Llama-4~\cite{meta2024llama4}, Qwen-2.5~\cite{qwen2p5_7b}) and mobile devices, with case studies confirming its ability to uncover obfuscated and policy-violating ads. 
Beyond ad auditing, it generalizes to broader runtime UI analysis tasks such as malware detection, while preserving the same agentic multimodal reasoning workflow.


Our contributions are threefold:
\begin{itemize}[leftmargin=*]
    \item We introduce \systemname, the first multimodal agentic framework for mobile ad detection. It pioneers an exploratory search paradigm that fuses static, visual, temporal, and experiential signals to discover sparse, evasive, and diverse advertising behaviors.  

    \item We design a reasoning-guided UI navigation strategy that integrates heterogeneous and evolving signals to infer hidden ad-related triggers. This enables the agent to determine both where to focus and how to traverse the interface, ensuring efficient and robust exploration.  
    
    
    
    \item We implement and evaluate \systemname on commercial smartphones across two representative datasets. \systemname attains state-of-the-art detection accuracy and efficiency, with case studies demonstrating its effectiveness in uncovering obfuscated and malicious ads. These findings underscore its practicality for mobile ad auditing and broader runtime UI analysis~(\eg, permission abuse). Code and dataset will be released to facilitate future research.
\end{itemize}

%% file: chapter/background.tex
\section{Background and Motivation}
\subsection{The workflow of mobile advertising.} 
\label{sec:adbackground}
Mobile ad in Android apps is primarily enabled through the integration of third-party ad SDKs (\eg, Google AdMob~\cite{AdMob2024}, Meta Audience Network~\cite{andreou2019measuring, andreou2018investigating}).
Once embedded, these SDKs dynamically fetch and render ad creatives from remote ad networks at runtime, supporting diverse formats such as banners, interstitials, native, or rewarded ads~\cite{ma2024careful}.
%
%
When triggered, an ad widget initiates a network request to retrieve content, which may lead to further UI transitions.
For example, redirecting the user to an in-app landing page, the Google Play Store, or external web content~\cite{liu2020maddroid}. 
In \systemname, these behaviors are treated as potential ad triggers.
Modern ad platforms increasingly rely on runtime auctions and user profiling for monetization, but this also brings risks such as intrusive ads, unwanted redirections, and even malware delivery~\cite{venkatadri2018privacy,speicher2018potential}.
%
%
%
%
Diverse ad embedding and runtime delivery strategies call for robust ad detection frameworks that generalize across app categories.
%
We group mobile ads into three types (Figure~\ref{fig:ad-type}): \textit{Embedded ads} blend into the content flow and may resemble functional UI; \textit{Popup ads} are interruptive overlays and are typically easier to detect; \textit{Custom ads} use app-specific/non-standard rendering with few SDK or textual cues, making them the hardest to identify.

%
%
%

\begin{figure}[!t]
\includegraphics[width=0.48\textwidth]{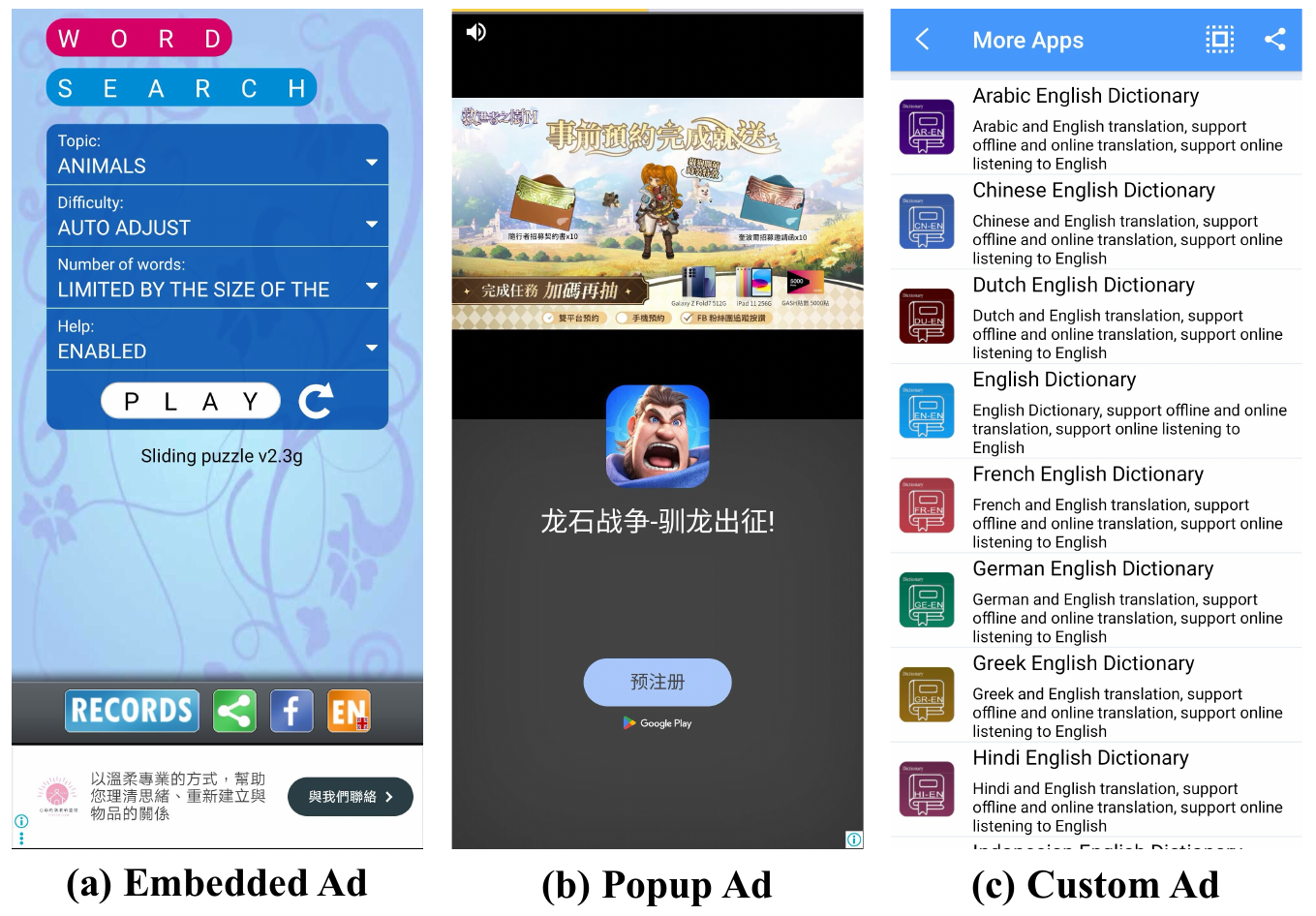}
  \vspace{-0.6cm}
  \caption{Examples of different ad types: (a) integrated within the app interface as a banner; (b) shown as an intrusive interstitial that blocks normal interaction; (c) implemented as a ``More Apps'' list without standard SDK patterns, requiring semantic reasoning to recognize as advertising.}
  \label{fig:ad-type}
  \vspace{-0.2cm}
\end{figure}

\subsection{Challenges in Mobile Ad Detection.}
\label{sec:challenges}
In this section, we reveal that ads in mobile apps exhibit three fundamental properties that complicate detection.

\noindent \textbf{Sparsity and Obscured.}
Ad-triggering events are rare and often depend on specific multi-step, time-gated interaction sequences rather than immediate UI feedback (e.g., “Subscribe” $\rightarrow$ “Watch Video” $\rightarrow$ wait for loading). Large-scale studies confirm this sparsity: automated exploration can spend up to $98.6\%$ of time in repetitive loops without reaching deeper ad-hosting states~\cite{wang2021vet}, and many ads only appear along deep execution paths~\cite{liu2020maddroid}. As a result, blind exploration wastes budget and easily misses hidden ads, motivating reasoning-guided navigation.
%
%
%
%

\begin{figure}[!t]
\includegraphics[width=0.48\textwidth]{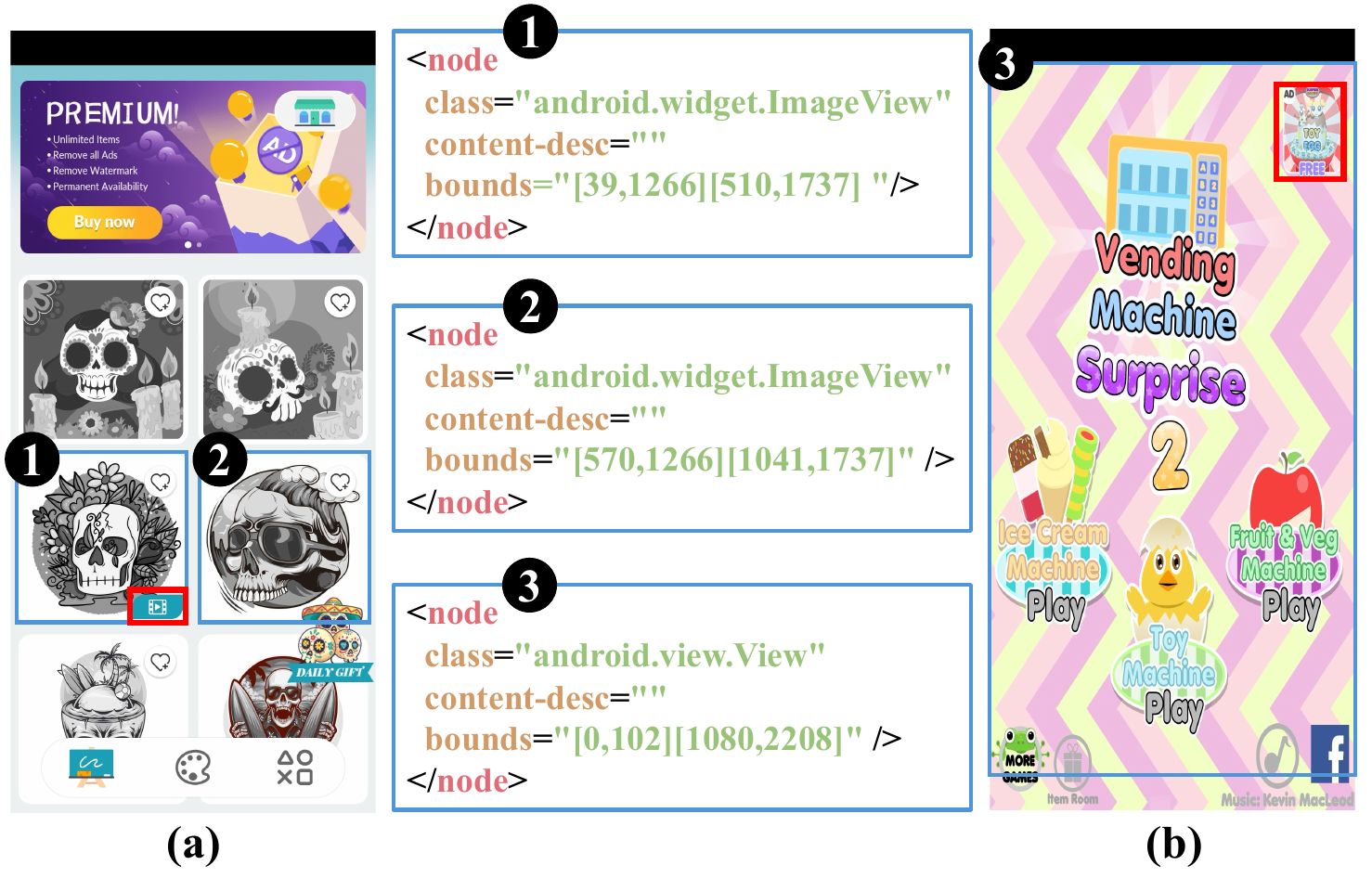}
  \vspace{-0.8cm}
  \caption{Screenshots of two example apps. The middle panels show the view-hierarchy nodes corresponding to the blue-boxed components in the screenshots, while red boxes highlight ad-related elements.}
  \label{fig:preliminary}
\end{figure}

\noindent \textbf{Visual Cues Beyond Metadata.}
Many ad detectors rely on view metadata (e.g., \texttt{class}, \texttt{content-desc})\cite{liu2020maddroid, ma2024careful}, but these fields are often empty, generic, or ambiguous. As Figure\ref{fig:preliminary}(a) shows, two \texttt{ImageView} elements share identical metadata~(\ie, \circlednum{1} and \circlednum{2}), yet only one visually contains a play icon indicating a video ad; metadata-only methods cannot separate them. Prior work also shows developers may deliberately obscure ad cues, such as embedding close buttons into banners or mimicking native UI elements~\cite{mathur2019dark}. Together, these observations motivate vision-based reasoning for robust ad detection.
%
%
%
%
%
%

\noindent \textbf{Heterogeneous Rendering Pipelines.}
Many apps expose structured view hierarchies that can be systematically parsed via the Accessibility Service~\cite{androidAccessibilityService}.
In contrast, an increasing fraction of apps, especially games, adopt canvas-based rendering (\eg, Unity). 
In such cases, the entire interface is painted as a bitmap, leaving no widget\-/level metadata available for accessibility-based inspection~\cite{ye2021empirical, rastogi2016these}. 
As shown in Figure~\ref{fig:preliminary}(b), ad\-/triggering components become invisible to the hierarchy, making screenshot-level analysis the only viable option~(\ie, \circlednum{3}). 
This heterogeneity highlights the need for a unified abstraction that can normalize accessibility\-/based and vision-based evidence into a consistent representation of actionable UI elements.

\subsection{Multimodal Reasoning-Guided UI Navigation for Ad Detection.}
\label{sec:opportunity}
Mobile ads exhibit sparse triggers, subtle visual cues, and diverse rendering pipelines, which impose three methodological requirements: (i) reasoning over sparse and obscured signals, (ii) leveraging visual cues to distinguish ad-specific patterns, and (iii) normalizing heterogeneous rendering mechanisms into a consistent abstraction of actionable UI components. Conventional approaches such as static SDK detectors, metadata-based classifiers, and heuristic exploration strategies cannot satisfy these requirements; they often fail to localize ads, miss critical visual evidence, or waste interaction budget in redundant exploration~\cite{liu2020maddroid, ma2024careful, mathur2019dark, mao2016sapienz}.

LLMs and VLMs enable multimodal UI reasoning: LLMs process textual metadata and context, while VLMs capture visual cues from screenshots. Prior work shows that combining textual, visual, and historical signals prioritizes high-yield paths and outperforms RL or heuristic navigation~\cite{wen2024autodroid, liu2024make, zhao2025llm}, making these models promising for ad-oriented exploration.
However, generic task-driven UI agents assume explicit goals and dense rewards, whereas ads are sparse, deliberately obscured, and often buried deep in interaction paths. Effective ad exploration therefore requires reasoning over temporal and visual evidence to surface ad cues and identify which UI components to interact with to trigger ads. To meet these needs, we build a multimodal structured reasoning agent on LLMs and VLMs for targeted mobile ad exploration, rather than a generic click predictor.

%% file: chapter/design.tex
\section{\systemname Design}
\begin{figure*}[h]
  \includegraphics[width=0.99\linewidth]{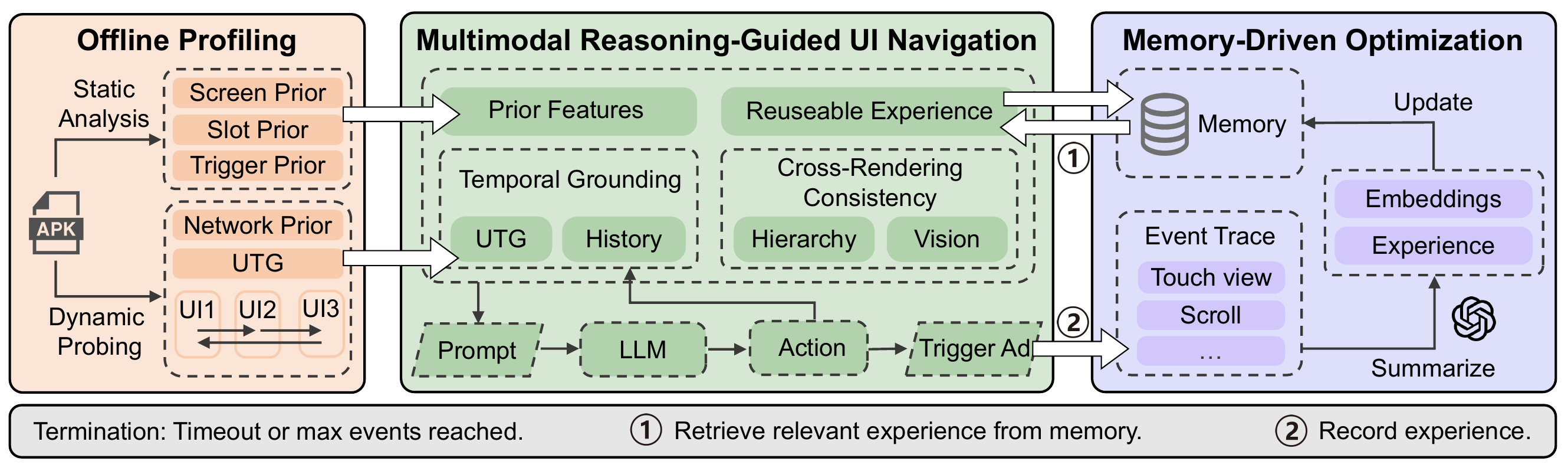}
  \vspace{-8pt}
  \caption{The system overview of \systemname.}
  \label{fig:overview}
\end{figure*}

\subsection{System Overview}

%
As illustrated in Figure~\ref{fig:overview}, \systemname operates in three tightly integrated stages: 
(1) \textbf{Offline Profiling} extracts ad\-/related priors and a coarse UTG through static analysis and lightweight exploration, providing early guidance for UI navigation; 
(2) \textbf{Multimodal Reasoning\-/Guided UI Navigation} enriches each UI state with these priors, semantic descriptions generated by the VLM, and UTG-based temporal and structural context, enabling the LLM to perform informed reasoning and avoid redundant exploration; 
(3) \textbf{Memory\-/Driven Runtime Optimization} distills successful ad\-/triggering trajectories into reusable experiences, allowing the system to recall effective strategies across similar states and thereby improve both accuraccy and efficiency.





\subsection{Offline Profiling}
At the outset of ad-oriented exploration, the detector encounters a \emph{cold start}: it lacks guidance on which screens are likely to host ads, which widgets to operate, or how long to wait before ads appear. 
This absence of prior knowledge inflates the state–action space, diminishes early coverage, and wastes exploration budget, with the additional risk of trapping the process in repetitive loops~\cite{wang2021vet}.

To address this issue, we incorporate prior knowledge to guide detection, motivated by two observations from our preliminary study.
(1) We analyze ads in 100 apps from the ADGPE dataset~\cite{ma2024careful}. As shown in Figure~\ref{fig:evidence-percentage}, declarative evidence (manifest) appears in $\sim98\%$ of cases, structural evidence (layout files with ad-specific \texttt{View} classes) in 65\% of apps, and behavioral evidence (\eg, \texttt{loadAd}/\texttt{show}) in 77\%. We also find strong temporal traces: in random exploration logs, 40.9\% of ad-related network requests occur within 3–5 seconds after the triggering interaction, indicating the value of network-based evidence.
(2) We further examine ad-type distribution (Figure~\ref{fig:adtype-percentage}) and find that custom ads dominate, while embedded and popup ads are less frequent. This directly impacts evidence availability: popup ads often expose explicit API calls, embedded ads more often leave layout traces, whereas custom ads may bypass both, making single-modality detection unreliable.

%
%
%
%
%
%

\begin{figure}[t]
\centering
\hspace{-0.1cm}
\begin{minipage}[c]{0.25\textwidth}
  \includegraphics[height=2.4cm, keepaspectratio]{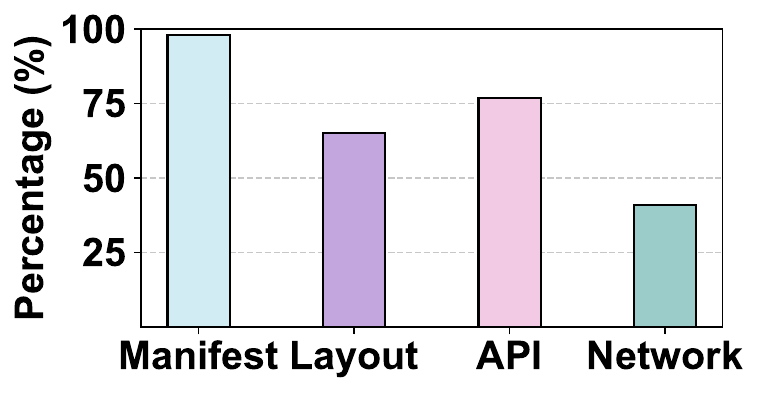}
  \vspace{-0.7cm}
  \caption{Distribution of different evidence.}
  \label{fig:evidence-percentage}
  \end{minipage}
\hspace{0.1cm}
  \begin{minipage}[c]{0.21\textwidth}
  \includegraphics[height=2.4cm, keepaspectratio]{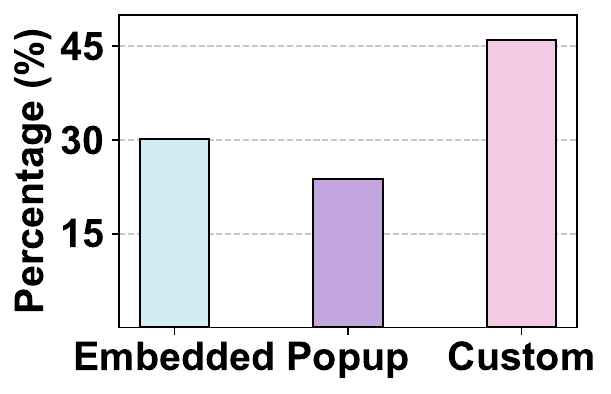}
  \vspace{-0.7cm}
  \caption{Proportions of different ad types.}
  \label{fig:adtype-percentage}
  \end{minipage}
\end{figure}

In summary, while declarative ad evidence is almost always present, structural, behavioral, and network signals vary across ad types. This motivates using multiple priors from offline profiling to narrow the search space, prioritize high-potential actions, and provide temporal guidance for ad discovery. Concretely, we perform three layers of static analysis to extract \emph{screen}, \emph{slot}, and \emph{trigger} priors, and use lightweight dynamic probing to build a coarse UTG and \emph{network} priors. We next describe how each prior is obtained.

\subsubsection{Ad-oriented Static Analysis.}
To extract actionable priors from static analysis, we follow a three-layer structure: declarative, structural, and behavioral, each uncovering complementary aspects of ad integration.

\noindent \textbf{Screen prior.}
We start from \texttt{Android\allowbreak Manifest.xml}, which encodes mandatory specifications for every app. 
Since ad SDKs must explicitly declare network permissions, register components such as \texttt{AdActivity}, and provide initialization metadata, the manifest offers reliable but coarse-grained evidence of ad integration. 
However, this evidence may be absent or misleading in adversarial cases, such as malicious apps that dynamically load ad code via reflection or obfuscate SDK entries~\cite{bhan2025dlcdroid}, or in aggregation SDKs where ad components are hidden behind generic declarations~\cite{dong2018understanding}. 
We therefore treat manifest entries only as global integration signals that contribute to \emph{screen priors}, to be cross-checked against structural and behavioral clues.

\noindent \textbf{Slot prior.}
Layout XML resources reveal the UI containers in which ads are hosted.
We parse files under \texttt{res/layout} and identify \texttt{View} classes whose package names match known ad SDK prefixes.
For each matched component, we record its resource identifier and position in the hierarchy, thereby mapping candidate ad containers to specific screens.
Beyond simple presence detection, we further infer the likely ad type (\eg, banner, interstitial, native, rewarded) from naming conventions and placement within the layout.
This process converts raw layout definitions into \emph{slot priors}, specifying where ads are likely to appear and in what form, thus providing spatial guidance for subsequent navigation and detection. 

\noindent \textbf{Trigger prior.}
Dalvik bytecode exposes operational logic beyond what manifests and layouts can show. 
We therefore propose an \emph{activity-centric attribution} model: ad-related calls (\eg, \texttt{loadAd}, \texttt{show}) and lifecycle callbacks are first detected in the code, and then traced upward through the class hierarchy until they are bound to a manifest-registered \emph{Activity}. 
This attribution ties low-level invocations to concrete screens, ensuring that behavioral evidence is localized to specific Activities rather than scattered across helper classes. 
In addition, attribution enables us to rank triggers by importance (\eg, \texttt{show*} $>$ \texttt{load*} $>$ \texttt{init*}), thereby yielding \emph{trigger priors} that indicate when ads are most likely to surface.


\subsubsection{Ad-oriented Lightweight Dynamic Probing.}
During online navigation, the UTG is gradually expanded as new states and transitions are explored, providing structural and temporal context for interaction. 
However, when an app is first analyzed, such context is largely unavailable, making the early phase inefficient. 
To mitigate this, we construct a coarse UTG using DroidBot~\cite{li2017droidbot} during offline profiling. 

\noindent \textbf{Coarse UTG construction.}
Although existing static analysis tools~\cite{azim2013targeted, lai2019goal, yang2018static} can in principle generate similar graphs, they often incur high computational costs, produce redundant paths due to over-approximation, and fail to capture runtime-specific behaviors.
We therefore adopt a lightweight dynamic random exploration that quickly samples representative transitions, yielding a more realistic and computationally efficient preliminary UTG.
\blue{The UTG provides a structural model of the state space, where nodes carry activity-level metadata and edges encode triggering events.}

\noindent \textbf{Network prior.}
Random exploration traces are noisy, so we post-process them by aligning dynamic states with static ad clues. During coarse UTG construction, \systemname captures runtime events from the Android system log (\texttt{logcat.txt}), then extracts ad-related network requests by matching known ad domains and characteristic parameters. Since each DroidBot event is timestamped, we correlate interactions with subsequent requests within a short window, linking UTG states to ad-loading activity. This yields \emph{network priors} that estimate the likelihood and timing of ad activation, and together with screen, slot, and trigger priors, provides structured evidence to guide online navigation.

\subsection{Multimodal Reasoning--Guided UI Navigation} 
As described in Section~\ref{sec:opportunity}, ad-oriented UI navigation must reason over heterogeneous signals across modalities and time scales, rather than relying on single-modality or single-step decisions.
To achieve this goal, we propose a multimodal reasoning algorithm for UI navigation.

\subsubsection{Decision Policy}
At each step $t$, \systemname instantiates an LLM-based decision policy
$\pi_{\text{MANA}}(\cdot \mid Z_t)$ via in-context reasoning.
We construct an augmented context
\begin{equation}
    Z_t = \Phi(M_t, V_t, \Sigma, H_t, \mathcal{E}_{\text{mem}}, \mathcal{G}_t),
\end{equation}
where $M_t$ is the current UI metadata, $V_t$ contains optional visual cues from the screenshot,
$\Sigma$ denotes offline priors, $H_t$ is the interaction history,
$\mathcal{E}_{\text{mem}}$ are retrieved cross-app experiences, and $\mathcal{G}_t$ is the local neighborhood of the UTG.

\noindent
\textbf{Normalized actionable observation.}
Instead of prompting the LLM with raw UI trees, \systemname converts the raw observation
$o_t=(M_t,V_t)$ into a structured actionable set $x_t$.
Each element in $x_t$ corresponds to an interactable widget with consistent attributes (\eg, type, text,
content description, bounds, and optional visual semantics), yielding a normalized action space across
different rendering pipelines.
The policy output is constrained to selecting one executable action $a_t$ from $x_t$ (plus a small set of
global actions such as \texttt{back}/\texttt{scroll}), which preserves executability and reduces hallucinated actions.

\begin{figure}[h]
\includegraphics[width=1.0\linewidth]{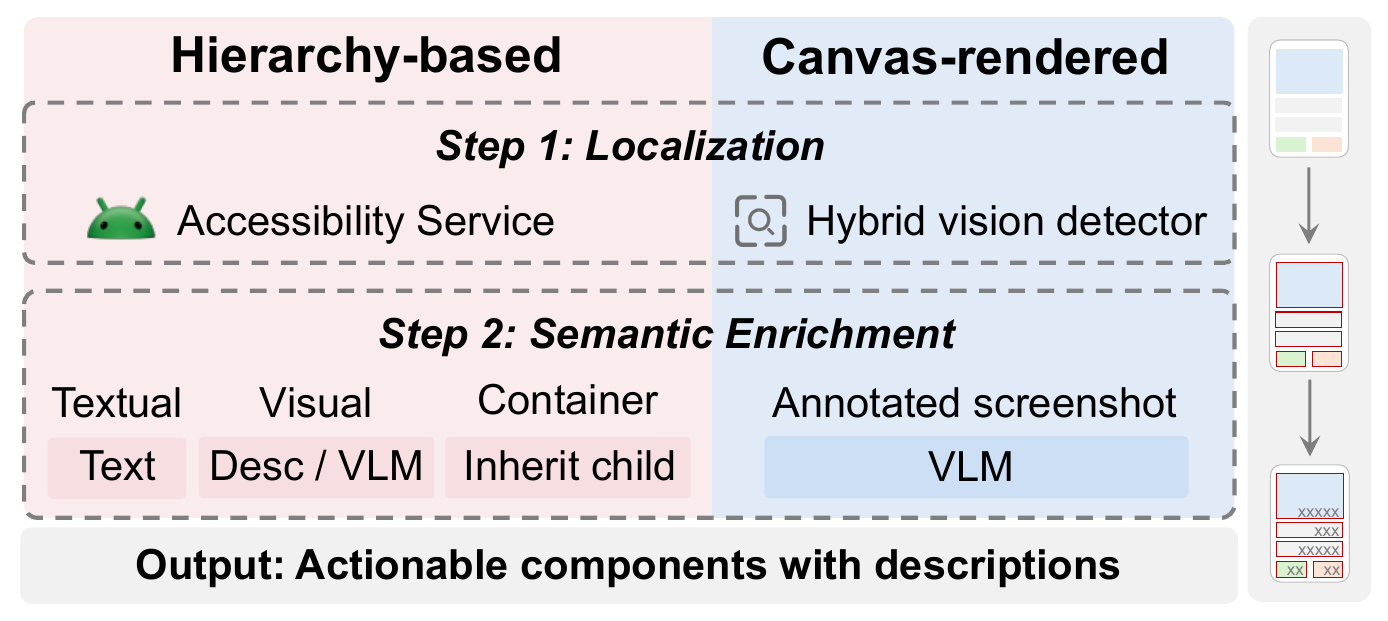}
  \vspace{-0.8cm}
  \caption{Cross-rendering consistency scheme, which combines Accessibility Service with visual techniques to produce semantically described actionable components for hierarchy-based and canvas-rendered apps.}
  \label{fig:ui}
\end{figure}

\begin{figure}[t]
\includegraphics[width=0.48\textwidth]{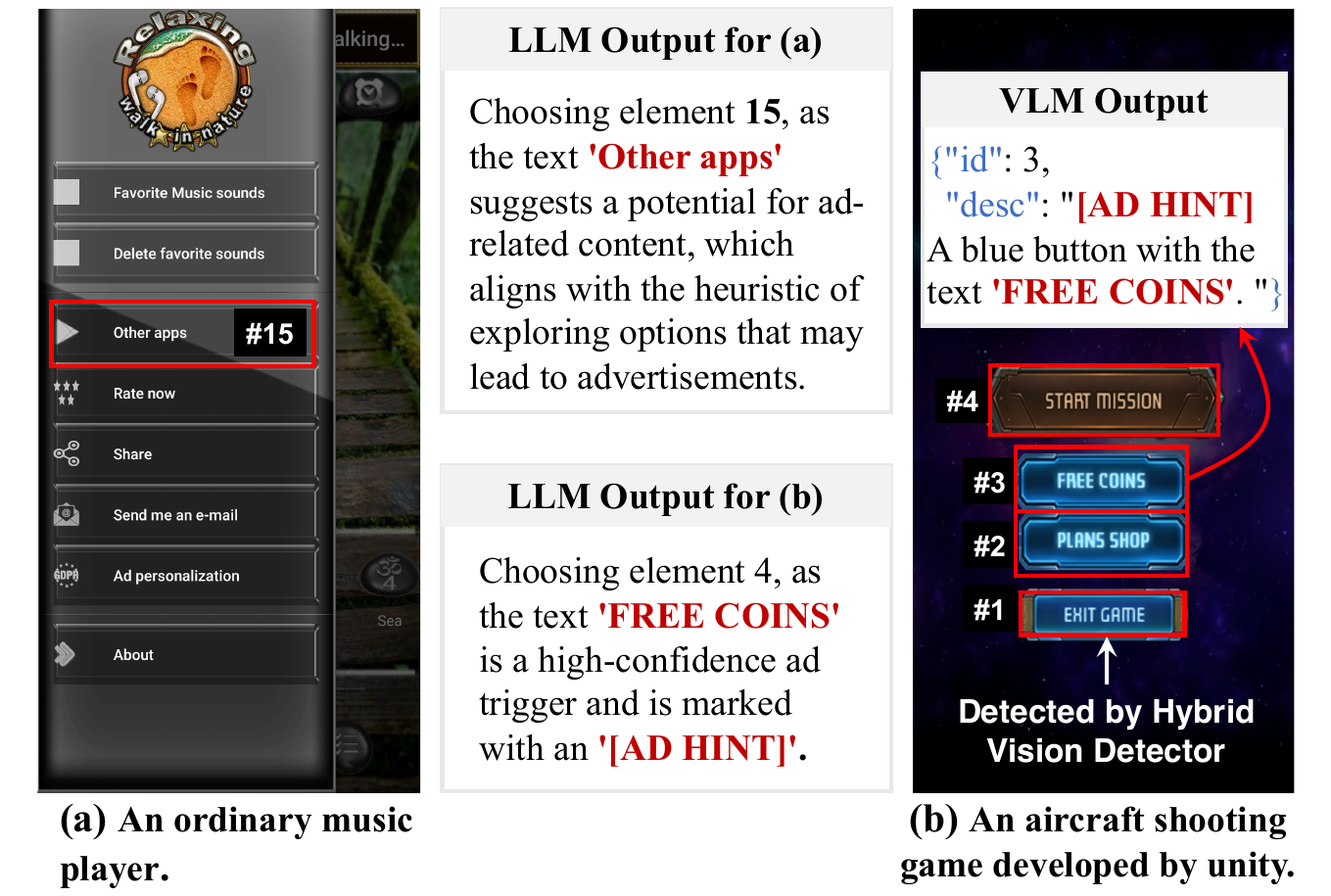}
  \vspace{-0.8cm}
  \caption{Examples of \systemname reasoning on a single UI screen. (a) a view-metadata case where hierarchical metadata identifies the ``Other apps'' button as a promotional redirection; (b) A visual cues case~(\eg, ``FREE COINS'') in a rendering app reveal advertising intent.}
  \label{fig:vlm-case}
\end{figure}

\noindent \textbf{Per-step outputs and temporal guidance.}
Given $Z_t$ and $x_t$, the LLM outputs (i) an action choice $a_t$, (ii) a brief rationale, and
(iii) an instantaneous ad-relevance score $\hat{s}_t \in [0,1]$ for Eq.~(\ref{eq:ema}).
After executing $a_t$, \systemname updates the UTG with the observed transition and smooths $\hat{s}_t$
into a node-level belief via EMA for temporal guidance and loop avoidance.
The formal definition of $Z_t$ and the coverage analysis are detailed in Appendix~\ref{app:mathematical}.

\noindent 
\textbf{Termination.}
The navigation loop terminates when an ad is detected, or when a predefined interaction budget $N_{\max}$ or time budget $T_{\max}$ is exhausted.

\noindent \textbf{Design principles.}
We realize this policy through via principles: (i) \emph{Cross-Rendering Consistency} to normalize
hierarchy-based and canvas-rendered UIs into actionable components, (ii) \emph{Temporal Grounding} to leverage
history and UTG structure, and (iii) \emph{Heterogeneous Signal Fusion} to integrate priors, runtime observations,
and reusable experiences into a unified prompt for robust decision-making.

\subsubsection{Cross-Rendering Consistency.}  
To handle rendering heterogeneity, \systemname unifies UI identification 
into two stages: \emph{localization} of candidate interactive regions and 
\emph{semantic enrichment} of their roles. 
As depicted in Figure~\ref{fig:ui}, for hierarchy-based apps, localization is obtained from the Android Accessibility Service, which exposes widget metadata such as class names, resource identifiers, and textual labels. 
With such metadata, the LLM can infer potential advertising intent~(Figure~\ref{fig:vlm-case} (a)).
However, canvas\-/rendered apps lack hierarchical metadata and thus 
require vision-based inference to recover actionable components.
To this end, \systemname introduces a \emph{hybrid vision detector} that integrates a domain\-/adapted deep detector\footnote{Instantiated with a fine-tuned YOLOv11 model in our prototype} with a heuristic visual analyzer~\cite{li2017droidbot}. 
The deep detector, tuned for canvas\-/rendered UIs, ensures high recall on domain\-/specific widgets, while the analyzer leverages lightweight geometric and pattern\-/based heuristics to augment coverage on general\-/purpose screens and low\-/complexity elements. 
The outputs of hybrid vision detector are unified through region consolidation, after which each candidate region is enriched with concise semantic captions by the VLM and subsequently provided to the LLM for downstream reasoning~(Figure~\ref{fig:vlm-case} (b)).
Our design ensures that both canvas-based and hierarchy\-/based UIs are normalized into a consistent abstraction of actionable widgets.


\noindent \textbf{Selective Vision Invocation.}
\blue{In addition to using VLMs to enrich the semantics of canvas-rendered apps, we also apply them to enhance widget level information in hierarch-based apps. However, }
\red{Since }applying VLMs to all UI elements incurs prohibitive costs in both monetary and computation, \blue{so }\systemname{} adopts a lightweight pre-filter based on UI hierarchy metadata. 
\red{Visual analysis is only triggered when handling widgets with ambiguous or missing metadata, particularly \texttt{ImageView} or canvas-based elements~(\eg, Unity) with vague identifiers, unusual layout positions, or suspicious clickability.}
\blue{Specifically, in hierarchy-based apps, VLM invocation is triggered only for widgets with ambiguous or missing metadata, particularly media container components such as \texttt{ImageView} that lack descriptive attributes. }
\red{In addition, VLMs are applied when metadata or coarse UTG construction is insufficient.}
\blue{Overall, \systemname invokes VLMs to perform semantic enrichment only on candidate regions identified by the hybrid vision detector in canvas-rendered apps or on widgets with missing metadata in hierarchy-based apps.}
This selective invocation strategy ensures robust ad detection while avoiding the overhead of applying VLMs to every UI element.
\blue{}

\subsubsection{Temporal Grounding.}
Ad-triggering behaviors are temporal: a click may only surface an ad after a short delay, repeated visits to the same screen may be required, and lack of temporal awareness can lead to redundant loops.
To capture these dynamics, we design a unified context manager that integrates recent interaction history with the evolving UTG. 
The history buffer records short-term trajectories, including executed events and ad outcomes, and uses an adaptive window to detect stagnation when the agent lingers in the same activity. 
The window expands to provide additional activity context, signaling to the LLM that it has remained in the current activity for several steps, prompting exploration of other activities and prevents excessive repetition within a single state.
The UTG \red{provides a structural model of the state space,} \blue{is} initialized with a coarse offline graph and refined incrementally online\red{, where nodes carry activity-level metadata and edges encode triggering events}.
\blue{As exploration proceeds, the UTG is continuously updated with newly observed transitions and revisit frequencies. At each decision step, it provides activities reachable within two hops from the current state, allowing \systemname to leverage structural information and decide whether to navigate toward neighbors more likely to contain ads.}
Together, these two views situate the agent in both temporal and structural context, enabling reasoning beyond single-step interactions.
This context is then consolidated into the prompt with static priors and current UI descriptions. 
The LLM both selects the next action and estimates the \red{ad likelihood} \blue{ad-relevance} of the current state\red{, producing a running estimate $\text{score}_{t+1}$ that is refined via an exponential moving average:} \blue{This estimation produces a smoothed relevance $\text{score}_{t+1}$, which is refined through an exponential moving average (EMA) mechanism:}
\begin{equation}
\label{eq:ema}
    \text{score}_{t+1} = (1-\alpha)\cdot \text{score}_t + \alpha \cdot \hat{s}_t,
\end{equation}
where \(\alpha\) is the \red{learning rate} \blue{smoothing factor} and \(\hat{s}_t\) denotes the \blue{instantaneous} ad-relevance score assigned by the LLM to the current state based on the above information. 
This dual role transforms the LLM from a reactive agent into an adaptive scorer, gradually steering exploration toward ad-relevant regions while avoiding redundant or cyclic behavior.

\subsubsection{Heterogeneous Signal Fusion.}
Even with consistent UI representations and temporal grounding, screen-only decisions remain fragile: visually similar components can lead to different outcomes, and short trajectories may miss deeper ad logic. To address this, we fuse heterogeneous signals from offline priors, runtime observations, and cross-app experiences into a unified prompt.
Concretely, we preprocess static analysis into a three-level knowledge base: (i) \emph{activity-level} signals that mark ad-related screens, potential triggers, and previously observed ad hosts; (ii) \emph{component-level} matches between visible widgets and ad-related classes/resource IDs; and (iii) \emph{global} priors such as ad libraries and network domains for consistency checks. These priors are combined with runtime UI layout and VLM semantics, as well as UTG and history context. We further retrieve memory-based experiences via embedding similarity and inject the distilled heuristics into the prompt. Together, this fusion yields a holistic decision context that improves navigation reliability, reduces redundant interactions, and increases ad discovery across diverse apps.

\begin{figure}[!t]
\includegraphics[width=0.49\textwidth]{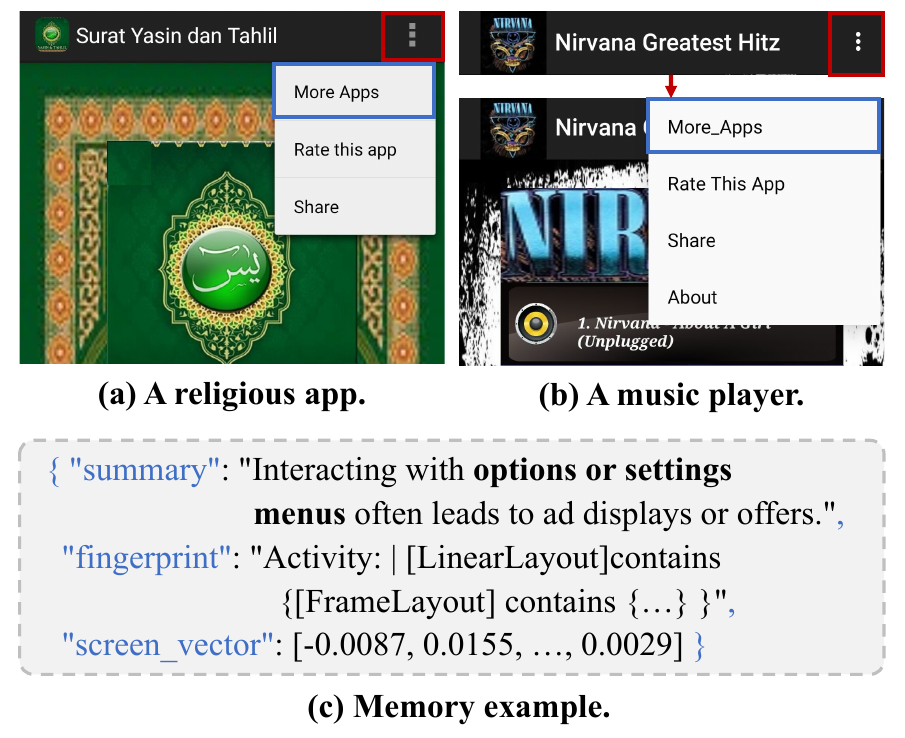}
  \vspace{-0.8cm}
  \caption{Cross-app regularities captured by memory.
(a) and (b) both exhibit recurring ad-triggering patterns via the ``Hidden Menu $\rightarrow$ More Apps'' option, (c) shows the corresponding memory entry abstracted into a reusable representation.}
  \label{fig:memory-case}
\end{figure}

\subsection{Memory-Driven Runtime Optimization}
Ad-triggering events in mobile apps are inherently sparse and context-dependent, often concealed behind ordinary\-/looking states. 
Unlike regular UI interactions, ads may surface only after specific sequences (\eg, clicking ``watch video'' followed by a short delay), or even require repeated visits to the same screen. 
Without a memory mechanism, existing agents are forced to rediscover such trajectories from scratch, leading to redundant exploration, wasted budget, and even indefinite loops that fail to reach deeper ad-hosting states.
This limitation becomes particularly acute in large applications, where the state space is vast but genuine ad opportunities remain rare.


Our insight is that successful ad triggers exhibit reusable patterns: the same types of screens, widgets, and transitions frequently precede advertisements across diverse applications, as illustrated in Figures~\ref{fig:memory-case}(a) and \ref{fig:memory-case}(b).
Motivated by this observation, we design a memory-driven optimization layer that systematically records, abstracts, and reuses prior successes to guide runtime navigation.
Whenever a trajectory successfully leads to an advertisement, the system records it as a new \emph{experience}. 
%
Each experience consists of two parts: (i) the \emph{triggering state} immediately preceding the ad, encoded as a hierarchical fingerprint of the UI that preserves class names, resource identifiers, and textual content in the view hierarchy; 
and (ii) a \emph{high-level summary} generated by the LLM that describes the effective interaction sequence in natural language, capturing actionable heuristics. 
The fingerprint is further embedded into a high-dimensional semantic space, allowing structurally and semantically similar states across apps to be aligned.
\blue{To control memory growth and maintain high retrieval efficiency, we perform offline clustering and pruning of near-duplicate or outdated trajectories. 
Formally, let $\mathcal{E} = \{\mathbf{v}_1, \mathbf{v}_2, \dots, \mathbf{v}_N\}$ denote the set of fingerprint embeddings, and define the cosine similarity between any two embeddings as 
$\text{sim}(\mathbf{v}_i, \mathbf{v}_j) = \frac{\mathbf{v}_i \cdot \mathbf{v}_j}{\|\mathbf{v}_i\| \, \|\mathbf{v}_j\|}$. 
For a similarity threshold $\tau$, any pair $(\mathbf{v}_i, \mathbf{v}_j)$ with $\text{sim}(\mathbf{v}_i, \mathbf{v}_j) \ge \tau$ is considered redundant, and only one embedding is retained. 
This pruning procedure produces a compact set $\mathcal{E}' \subseteq \mathcal{E}$ that preserves coverage while reducing redundancy.
}

At runtime, when the agent encounters a new state, it applies the same fingerprinting and embedding to query the experience repository. 
\blue{For small to moderate repository sizes, full in-memory similarity computation is sufficient; however, for scalability, we optionally employ an approximate nearest neighbor (ANN) search. This allows the system to efficiently retrieve relevant experiences even as the number of stored trajectories grows, while preserving the fidelity of similarity-based retrieval.}
Retrieved experiences provide both a similarity score and the distilled summaries as shown in Figure~\ref{fig:memory-case}(c), which are injected into the agent's reasoning process.
This enables the system to recall past ad-triggering strategies and adapt them to the current context, converting exploration from a purely reactive process into an experience-guided one. 
As a result, the agent converges more quickly toward ad-relevant states, reduces redundant navigation, and improves efficiency in diverse application environments.

\begin{figure}[!t]
\hspace*{-0.8em}
\includegraphics[width=0.5\textwidth]{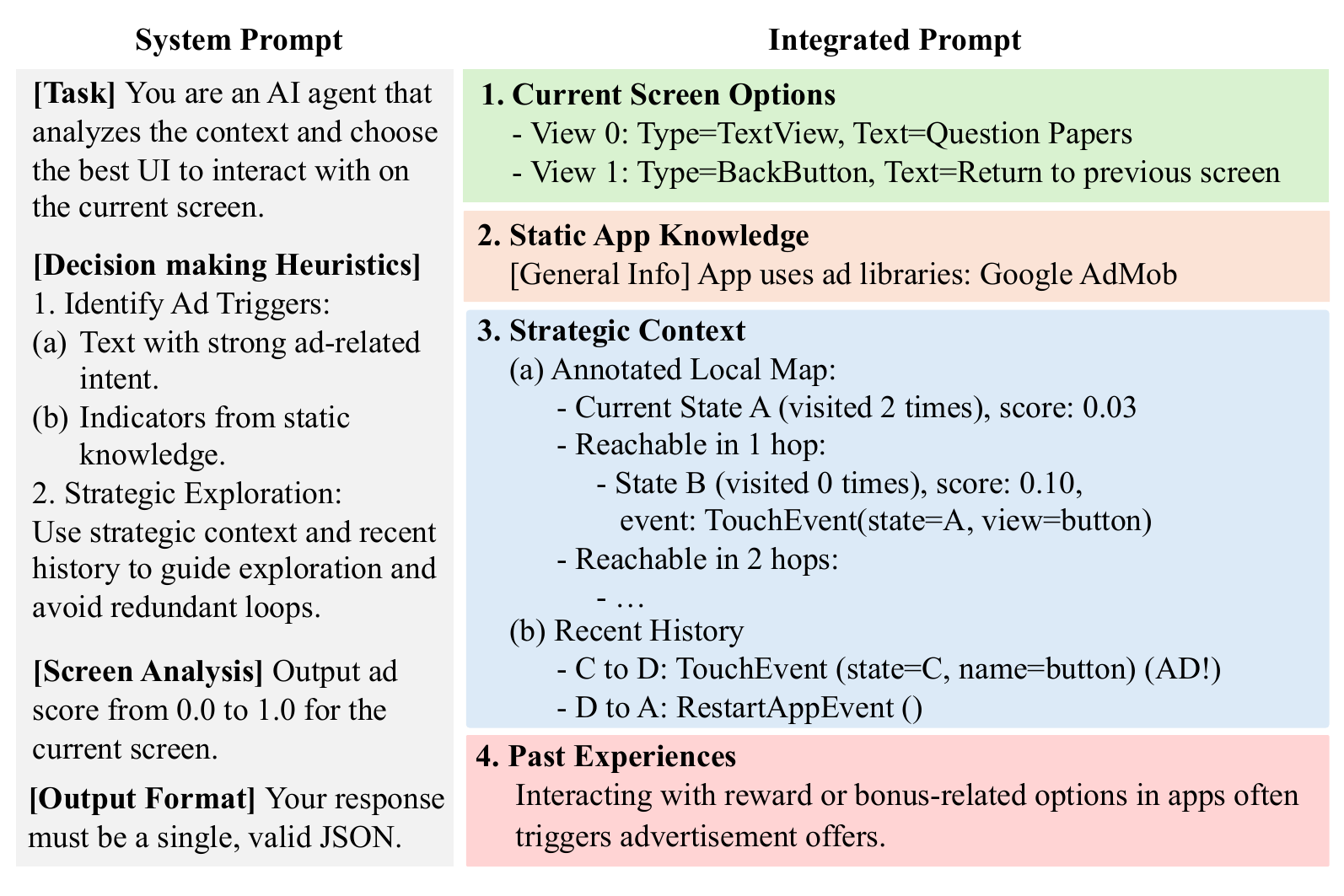}
\vspace{-0.7cm}
  \caption{An illustration of the structured prompt in \systemname. The system prompt defines the task and heuristics, while the integrated prompt supplies multi-faceted context for decision-making.}
  \label{fig:prompt}
\end{figure}


\subsection{Structured Prompt Design}
In order to inject multi-stage context into the LLM, we design a structured prompt that turns heterogeneous exploration signals into a coherent input for reasoning.
Our structured prompt has two layers. 
\textbf{System Prompt} provides stable guidance by specifying the agent’s objective, decision heuristics, and output format (\eg, identify ad triggers, avoid redundant loops, and return normalized scores). 
\textbf{Integrated Prompt} augments this backbone with contextual signals collected at runtime. 
As shown in Figure~\ref{fig:prompt}, it organizes information into four slots: (i) \textit{current screen options}, describing visible widgets and their textual attributes; (ii) \textit{static app knowledge}, distilled from offline profiling, which highlights ad-related SDKs, components, and identifiers; (iii) \textit{strategic context}, including both an annotated local UTG (reachable states, event transitions, and ad-likelihood scores) and the recent interaction history; and (iv) \textit{past experiences},
retrieved from the memory repository, which summarizes effective ad-triggering trajectories observed previously.

This multi-stage injection ensures that the LLM receives both stable objectives and structured evidence spanning current, offline, temporal, and experiential dimensions. 
Thus, the prompt enables the agent to reason jointly about where to act, what to prioritize, and how to adapt past heuristics to the current context.

%% file: chapter/implementation.tex
\section{Implementation}
\textbf{Experimental Setup. }
We deploy and evaluate our system on two commercial smartphones, an Honor V20 (Android 9) and a Huawei Mate 40 Pro (Android 12).
Each app is explored for $R=5$ independent runs with different random seeds, capped at $T_{\max}=300$ s and $N_{\max}=60$ steps per app. A 5s interval between events allows ads to fully load, which is independent of \systemnames reasoning time and can be reduced to 1-2 s in practice.
%
For fairness, all baselines are re-executed under identical budgets and device conditions. 
%
We encode hierarchical fingerprints of each screen using OpenAI's \texttt{text-embedding-ada-002}~\cite{openai_ada002} to measure structural similarity across screens.
The backbone model is GPT\-/4o mini~\cite{hurst2024gpt}, with additional comparisons to LLaMA4-Scout~\cite{meta2024llama4} and Qwen2.5-7B~\cite{qwen2p5_7b} under identical API settings. 

\red{For UI states with incomplete or ambiguous metadata, we employ a hybrid vision-based detector. Specifically, a fine-tuned YOLOv11 model (trained on the Game GUI dataset~\cite{ye2021empirical}) targets canvas-rendered and game-oriented interfaces, while a visual analyzer leveraging Android Accessibility Services reconstructs general-purpose widgets using geometric and layout heuristics. 
The two outputs are merged into a unified set of candidate regions, which are subsequently enriched with semantic captions by the VLM and forwarded to the LLM for reasoning.
}
\blue{\systemname obtains runtime information via the Android Accessibility Service.
For hierarchy-based apps, VLMs enrich semantic understanding only when widget metadata is missing.
For canvas-rendered apps, \systemname employs a hybrid vision detector: a fine-tuned YOLOv11 model (trained on the Game GUI dataset~\cite{ye2021empirical}) identifies UI widget boundaries, which are combined with results from traditional edge-based contour analysis to reconstruct fine-grained component boundaries. VLMs are then applied to semantically enrich these detected components.
Ultimately, both app types yield a unified representation of the currently interactable components, which serves as input for subsequent LLM reasoning.}
The YOLO fine-tuning step is performed once offline on a server equipped with an NVIDIA RTX~4090 GPU.
At runtime, all experiments are executed directly on mobile devices, relying only on lightweight API calls. These calls cover UI exploration, perception, and multimodal reasoning.
This design ensures that \systemname remains both practical for deployment on commodity devices and fairly comparable to baselines under identical evaluation conditions.

\noindent \textbf{Dataset. } We evaluate \systemname on two datasets: \textbf{ADGPE~\cite{ma2024careful}: }This dataset contains $100$ Android apps with $139$ ads and is relatively recent. 
However, it has several limitations: many older apps no longer function correctly; it includes clusters of highly similar apps (\eg, dictionary apps that differ only in language but share nearly identical APK structures); and it contains very few canvas-based apps, making it less suitable for evaluating our model’s handling of such interfaces.
\textbf{\datasetname: } We curate a new dataset from AndroZoo~\cite{allix2016androzoo} based on criteria ensuring both diversity and relevance: (i) apps released after $2020$ to reflect current advertising practices; (ii) confirmed presence of ad libraries, indicating a likelihood of containing ads; (iii) availability on Google Play as a proxy for app quality; and (iv) a download range between $10,000$ and $100$ million to capture apps of varying popularity. Using metadata retrieved via the AndroZoo API and subsequent filtering, we collect $100$ apps with $119$ ads, which we denote as \datasetname, including $27$ canvas-rendered apps. 

\noindent \textbf{Baselines. }
We compare \systemname with four representative dynamic exploration baselines. 

\begin{itemize}[leftmargin=*]
    \item \textit{Monkey}~\cite{monkey2023}. A system-level tool that generates random events for basic UI exploration and stress testing.
    \item \textit{DroidBot}~\cite{li2017droidbot}. A general-purpose testing tool that augments random exploration with a lightweight runtime state model for more structured navigation. In our evaluation, we employ it with a breadth-first strategy.
    \item \textit{MadDroid}~\cite{liu2020maddroid}. An ad-oriented exploration tool that prioritizes UI components likely to display ads, such as \texttt{WebView}, \texttt{ImageView}, and \texttt{ViewFlipper}.
    \item \blue{\textit{DARPA}~\cite{cai2023darpa}. A vision based method that identifies trap UI components from application screenshots using an object detection model. As DARPA does not support UI exploration, we directly collect ad containing pages and perform component detection on their screenshots.}
    \item \textit{ADGPE}~\cite{ma2024careful}. A state-of-the-art mobile ad detection framework that identifies candidate ad widgets using a keyword-driven strategy, prioritizing UI elements whose attributes (\eg, text, resource-id, class) contain ad-related terms (\eg, ``Install Now'', ``Learn More''). \blue{We compare only against its first-stage ad-navigating UI exploration, as its subsequent graph-learning module targets cross-app malware detection and is out of our scope.}
    
\end{itemize}

\noindent \textbf{Metrics.} 
We employ two complementary metrics. (1) \textit{Detection Rate} quantifies effectiveness as the number of distinct ads discovered within a fixed exploration window, with repeated refreshes of the same component counted only once. 
(2) \textit{Average Steps} quantifies efficiency as the mean number of interactions required to trigger an ad, where fewer steps indicate more direct navigation to ad components.


%% file: chapter/evaluation.tex
\section{Evaluation}

\begin{figure}[!t]
\includegraphics[width=0.48\textwidth]{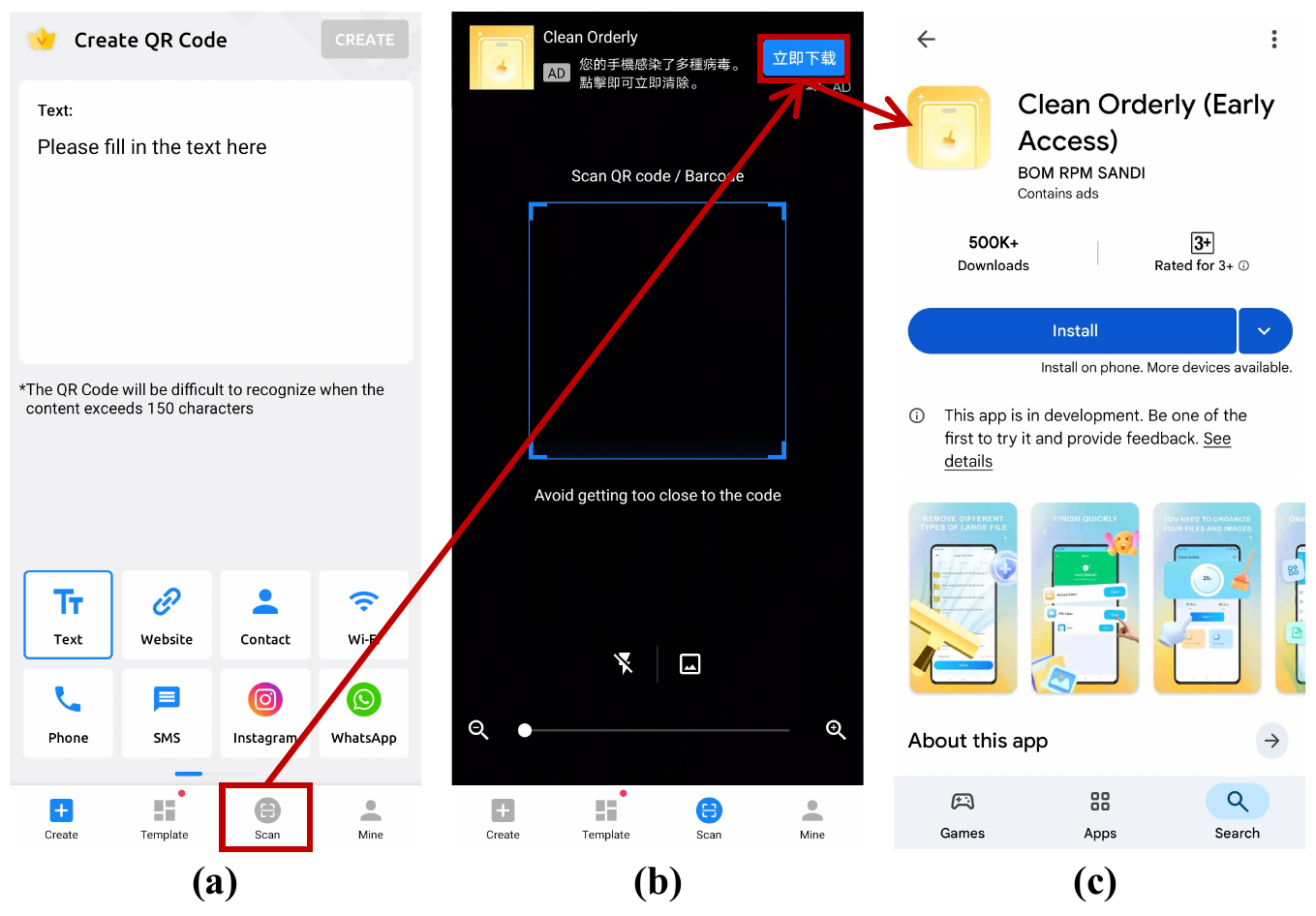}
  \vspace{-0.6cm}
  \caption{Representative success case of \systemname.}
  \label{fig:success-navigate}
\end{figure}

\begin{figure*}[t]
  \begin{minipage}{0.5\linewidth}
  \centering
  \subfigure[Detection Rate]{
  \includegraphics[width=0.486\linewidth]{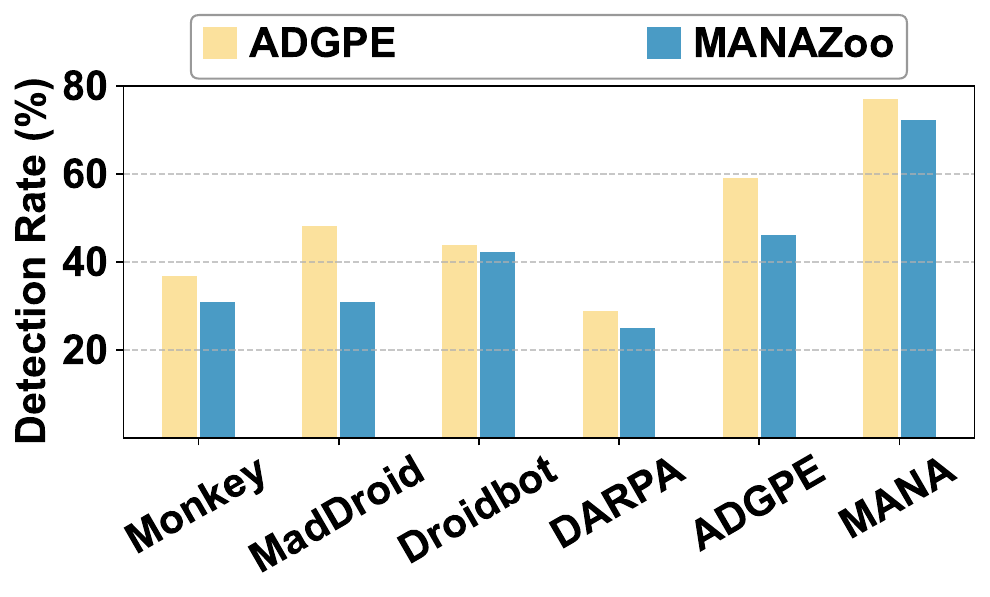} \label{fig:overall_a}}
   \hspace{-0.2cm}
  \subfigure[Average Steps]{
  \includegraphics[width=0.486\linewidth]{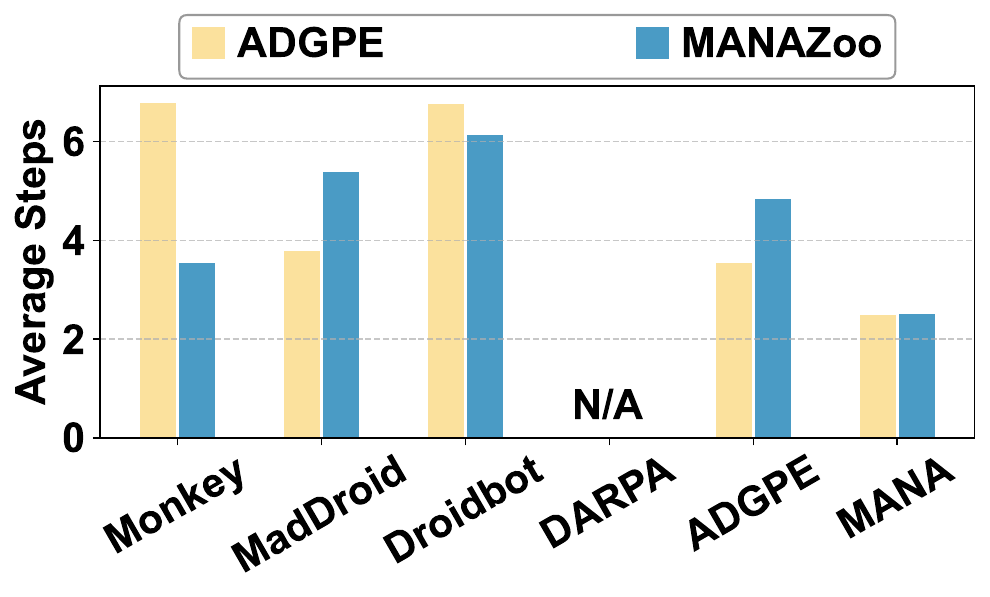}\label{fig:overall_b}}
    
    \vspace{-0.4cm}
   \caption{\blue{Overall detection performance.}}
  \end{minipage}
  \hspace{-0.2cm}
    \begin{minipage}{0.5\linewidth}
  \subfigure[ADGPE]{
  \includegraphics[width=0.486\linewidth]{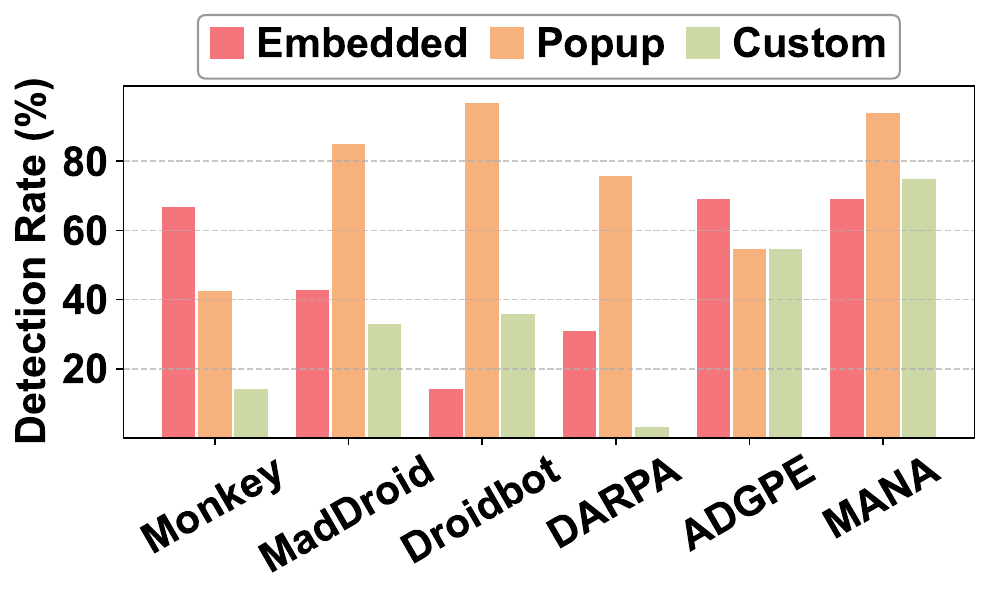} \label{fig:overall_c}}
     \hspace{-0.22cm}
  \subfigure[MANAZoo]{
  \includegraphics[width=0.486\linewidth]{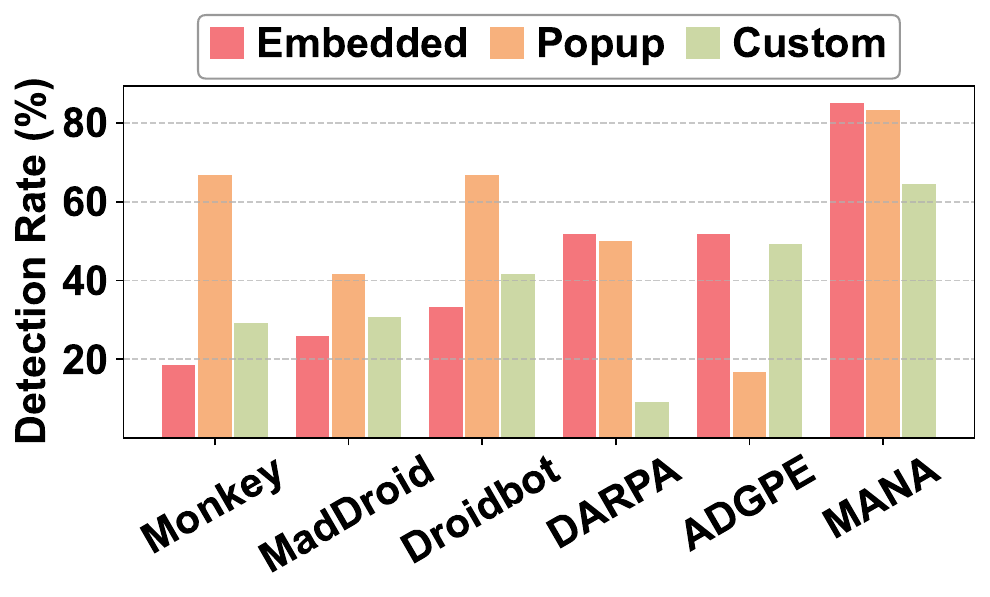}
      \label{fig:overall_d}}
       \vspace{-0.4cm}
   \caption{\blue{Category-wise detection performance.}}
 \end{minipage}

 \label{fig:baseline}
\end{figure*}

\begin{figure*}[t]
\vspace{0.1cm}
  \begin{minipage}{0.5\linewidth}
  \centering
  \subfigure[ADGPE]{
  \includegraphics[width=0.486\linewidth]{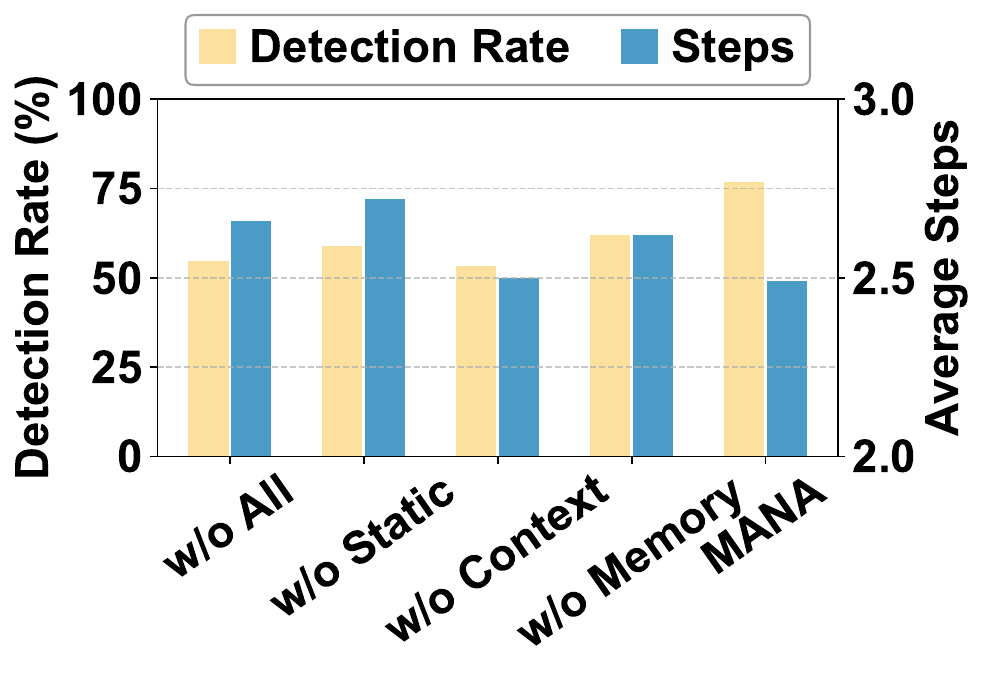} \label{fig:ablation-total-adgpe}}
   \hspace{-0.2cm}
  \subfigure[MANAZoo]{
  \includegraphics[width=0.486\linewidth]{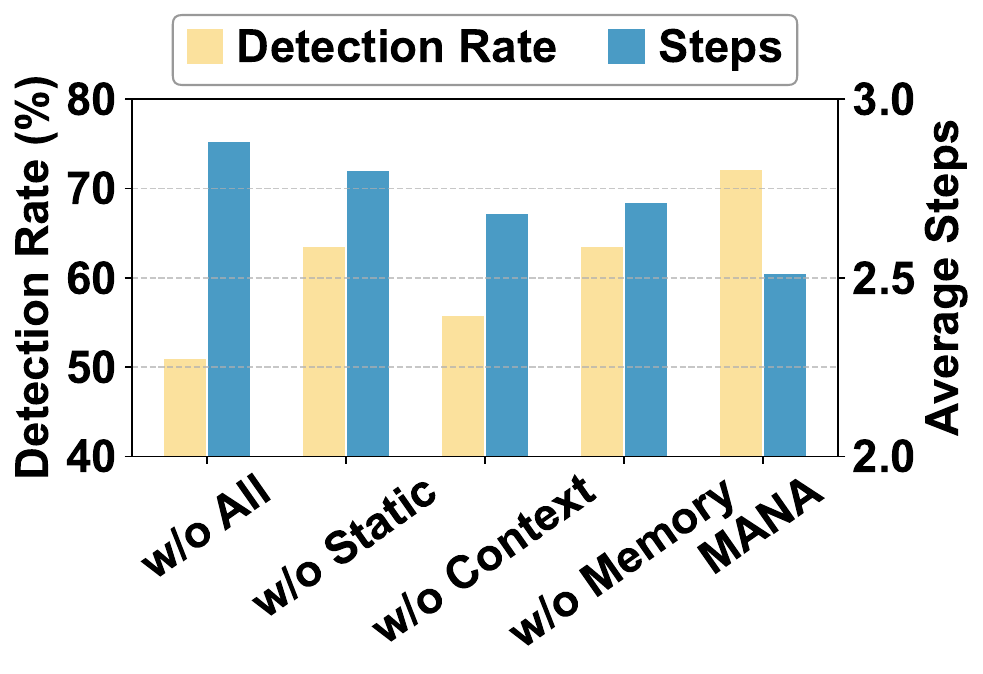}\label{fig:ablation-detail-adgpe}}
    \vspace{-0.4cm}
   \caption{Overall ablation results.}
     \label{fig:ablation1}
  \end{minipage}
  \hspace{-0.2cm}
    \begin{minipage}{0.5\linewidth}
  \subfigure[ADGPE]{
  \includegraphics[width=0.486\linewidth]{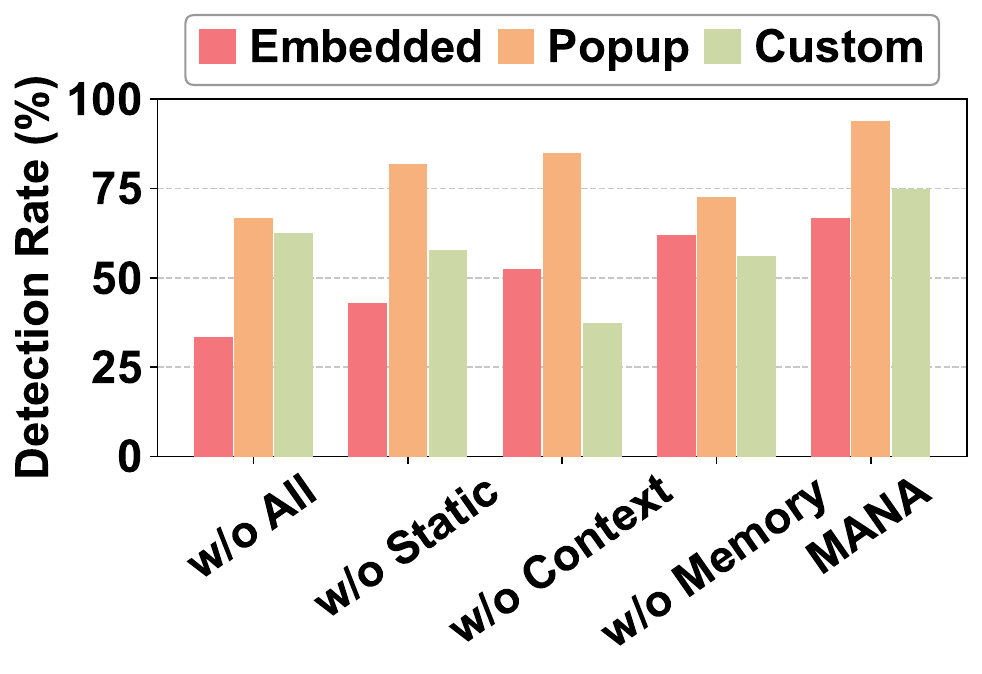} \label{fig:ablation-total}}
     \hspace{-0.22cm}
  \subfigure[MANAZoo]{
  \includegraphics[width=0.486\linewidth]{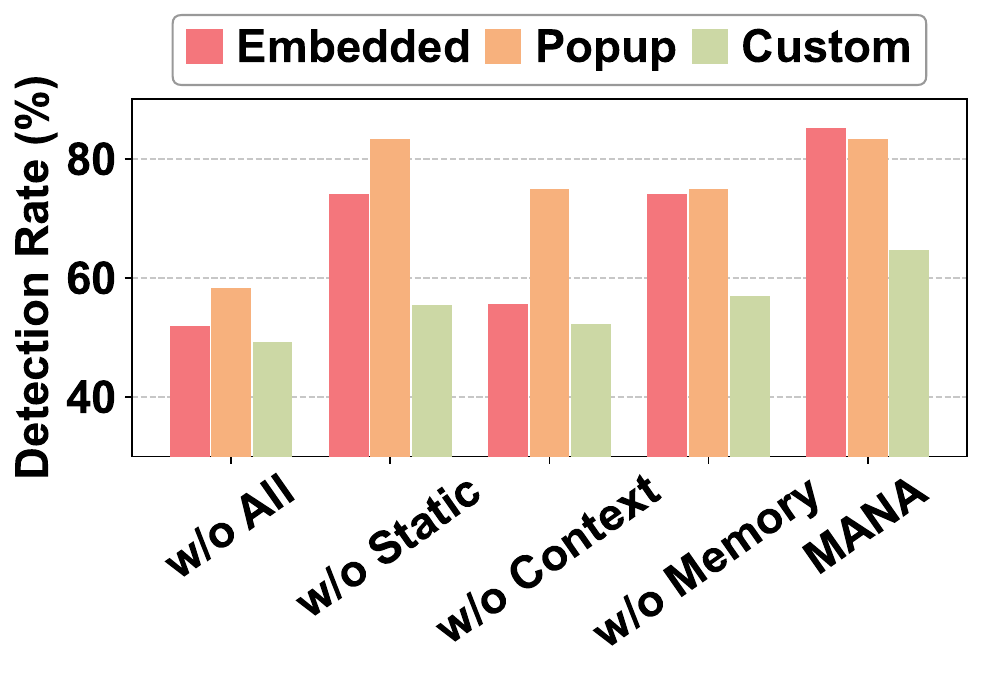}
      \label{fig:ablation-detail}}
       \vspace{-0.4cm}
   \caption{Ablation impact across ad categories.}
      \label{fig:ablation2}
 \end{minipage}

\end{figure*}

\subsection{Overall Performance}
\label{sec:rq1}
We evaluate \systemname against the baselines on both the ADGPE and MANAZoo datasets.
Figure~\ref{fig:success-navigate} shows a representative case in MANAZoo where \systemname uncovers a highly concealed ad. Instead of an obvious banner or popup, the ad is hidden behind the ``Scan'' entry of a QR code tool (a), leading through a camera interface (b) before redirecting to an external app store page (c). Such multi-step obfuscation easily defeats heuristic exploration, but \systemname efficiently pinpoints the trigger by combining heterogeneous signals, reasoning both ``\textit{where to go}'' and ``\textit{how to go}''.

In addition, Figure~\ref{fig:overall_a} shows that \systemname consistently outperforms the baselines in coverage, achieving detection rates of $77.0\%$ on ADGPE and $72.1\%$ on MANAZoo, which represent $30.5\%$–$56.3\%$ relative improvements and set a new state-of-the-art.
These gains can be better understood through a breakdown by ad category, as defined in Sec.~\ref{sec:adbackground} (Figure~\ref{fig:overall_c}–\ref{fig:overall_d}), which highlights the sources of improvement.
Popup ads are typically easier to detect because of their intrusive interaction patterns\blue{, which allows even simple baseline methods to achieve relatively strong detection performance}; however, in MANAZoo many are embedded within functional buttons, which makes them substantially harder to capture.
Notably, ADGPE underperforms in this category because its keyword-driven heuristics fail to capture popups lacking explicit textual cues, whereas \systemname attains higher detection rates by leveraging complementary multimodal reasoning.

For Embedded and Custom ads, ADGPE outperforms \red{DroidBot and MadDroid} \blue{other baseline methods}, yet still falls substantially short of \systemname.
We further observe that while baseline detection rates are often limited to $20\%$–$40\%$, \systemname consistently attains $70\%$–$80\%$, demonstrating robustness to obfuscation and non\-/standard implementations.
This contrast is particularly evident when compared with heuristic exploration methods such as DroidBot and MadDroid.
DroidBot explores apps using UI\-/state modeling with rule-based event generation, whereas MadDroid augments this approach with ad-specific media UI container heuristics.
These strategies are adequate for obvious cases but break down when ads are obfuscated or rendered through custom components.
\blue{DARPA represents a vision based approach that detects specific UI components from screenshots. While it performs relatively well on Popup ads, it exhibits clear limitations on other ad categories, particularly Custom ads, as such carefully crafted ads often lack explicit visual cues.}
In contrast, \systemname adopts an agentic multimodal reasoning paradigm that infers advertising intent from heterogeneous signals, thereby maintaining high detection performance even under obfuscation and evolving implementations.

\noindent \textbf{System Efficiency.} 
Figure~\ref{fig:overall_b} demonstrates that \systemname markedly reduces redundant UI exploration, requiring substantially fewer steps for ad detection. 
\blue{Average steps are not applicable to DARPA, as it lacks navigation capability and analyzes only the current screenshot.}
Compared with random baselines (Monkey and DroidBot), it shortens trajectories by $29.8\%$–$63.3\%$, and against heuristic baselines (MadDroid and ADGPE) by $29.7\%$–$53.8\%$, while simultaneously achieving higher detection rates. 
The average step count converges to $2.4$, approaching the practical lower bound. 
\blue{The average LLM reasoning latency is $\sim2.96$s per step, resulting in fewer interactions under a fixed time budget, yet higher detection rates, indicating more effective per-step decisions.}

%
These results indicate that efficiency gains arise from the agent's ability to infer ``where to go'', yielding a improvement of higher detection accuracy and lower exploration costs.
From a deployment perspective, fewer steps also translate into reduced computational overhead, energy consumption, and latency, underscoring the practicality of \systemname.

\subsection{Ablation Study}
\label{sec:rq2}
We further evaluate the role of heterogeneous signals by ablating static profiling, contextual reasoning, and memory (Figure~\ref{fig:ablation1}).
Without these signals, performance degrades substantially, with detection rates dropping to $\sim50\%$ and average steps rising to nearly $3$, underscoring their importance in avoiding blind exploration.
Removing static profiling markedly increases steps, \red{confirming its value in pruning redundant paths.}
\blue{reflecting that in \systemname, static analysis primarily serves as a cold start prior that accelerates early exploration and prunes redundant paths rather than acting as a strict dependency, since the system remains effective through multimodal inference even when static cues are unavailable or obfuscated.}
In contrast, eliminating contextual signal leads to detection rates dropping below $60\%$, highlighting the necessity of this heterogeneous cue for inferring implicit ad semantics.
Memory improves both detection and efficiency by retaining cross-app knowledge of effective and ineffective trajectories, thereby enhancing robustness under obfuscation.
%
%
In its full configuration, \systemname achieves the highest detection rate ($>75\%$) and the fewest steps, approaching the practical lower bound.

Figure~\ref{fig:ablation2} reports category-wise results and shows that heterogeneous signals contribute differently across ad types. Popup ads are relatively easy, but static signals still improve reliability by reducing spurious interactions. Embedded and Custom ads degrade sharply without contextual signals, indicating their importance for reasoning under weak or obfuscated cues. Memory is most helpful for Custom ads, where non-standard implementations benefit from accumulated experience. With all signals enabled, \systemname achieves the best accuracy across categories ($\approx85\%$ Popup, $\approx80\%$ Embedded, $\approx75\%$ Custom), confirming complementary roles of static profiling, contextual reasoning, and memory for accuracy–efficiency trade-offs.

%
%
%
%
%
%

\begin{figure*}[!h]
  \begin{minipage}[t]{0.3\linewidth}
    \centering
    \includegraphics[height=2.8cm, keepaspectratio]{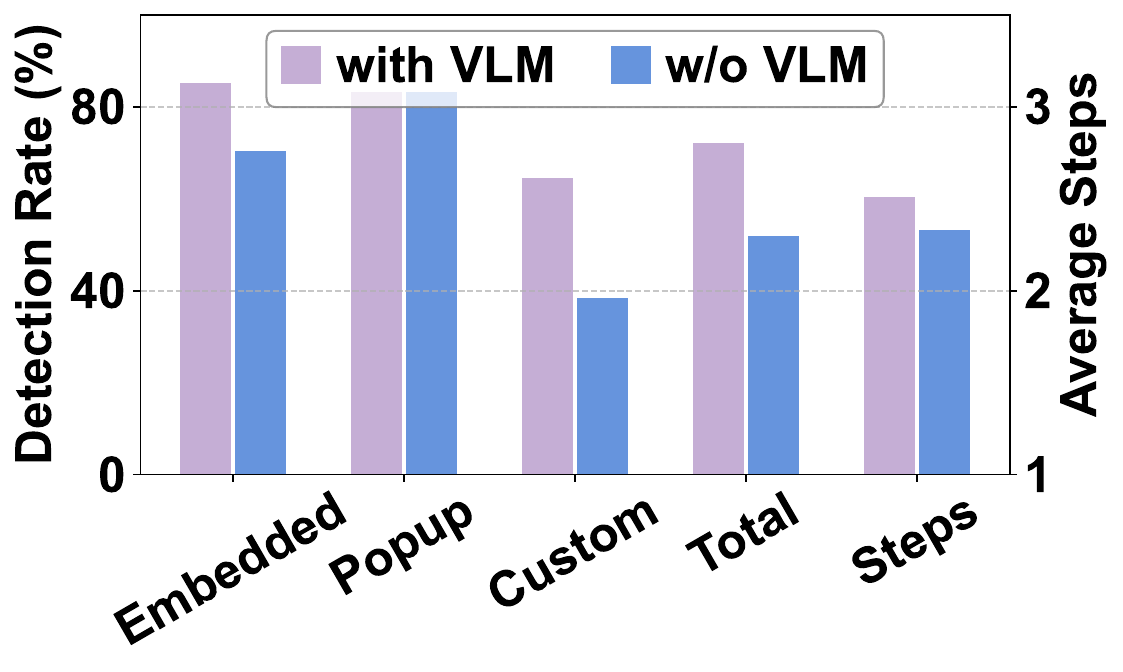}
         \vspace{-0.35cm}
    \caption{Impact of visual cues.}
    \label{fig:ablation-vlm}
  \end{minipage}
  \hspace{-0.2cm}
  \begin{minipage}[t]{0.20\linewidth}
    \centering
    \includegraphics[height=2.8cm, keepaspectratio]{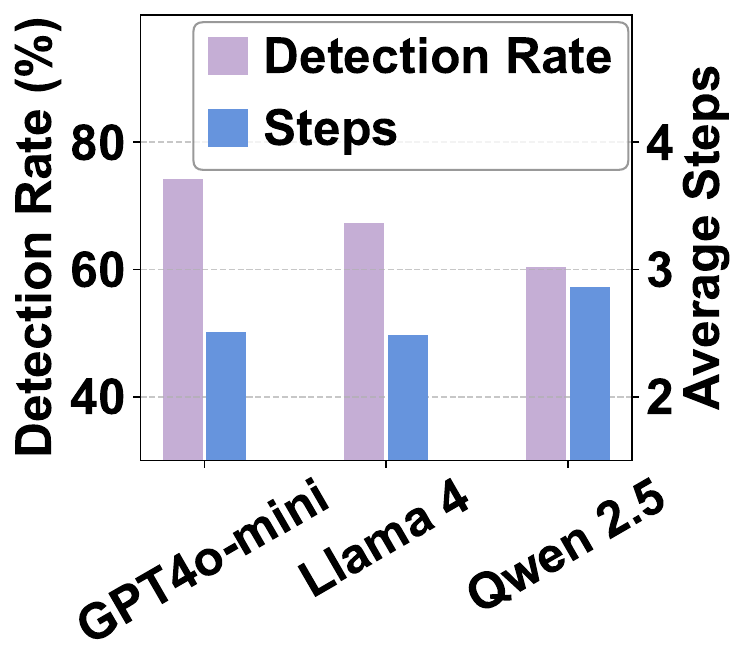}
         \vspace{-0.35cm}
    \caption{Models.}
    \label{fig:llm-comparison}
  \end{minipage}
  \hspace{-0.2cm}
  \begin{minipage}[t]{0.20\linewidth}
    \centering
    \includegraphics[height=2.8cm, keepaspectratio]{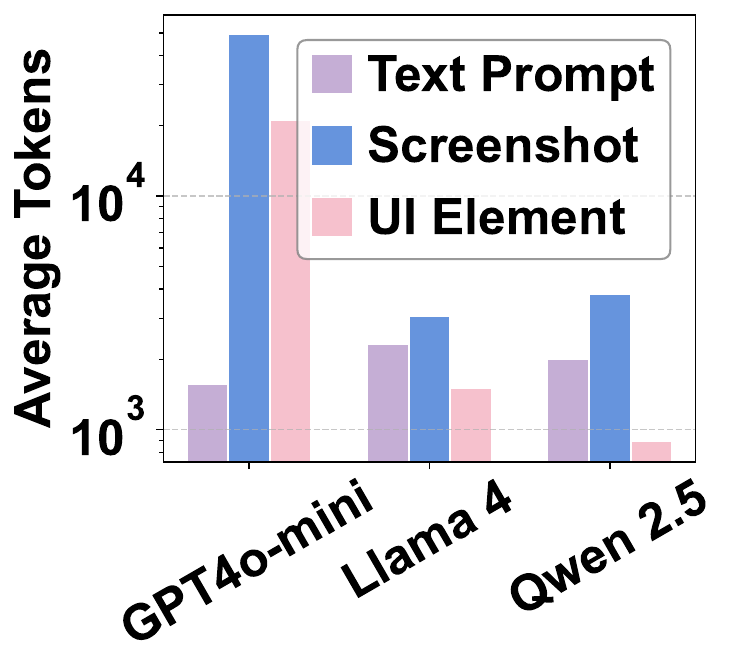}
         \vspace{-0.35cm}
    \caption{Tokens cost.}
    \label{fig:tokens}
  \end{minipage}
  \hspace{-0.2cm}
  \begin{minipage}[t]{0.3\linewidth}
    \centering
    \includegraphics[height=2.8cm, keepaspectratio]{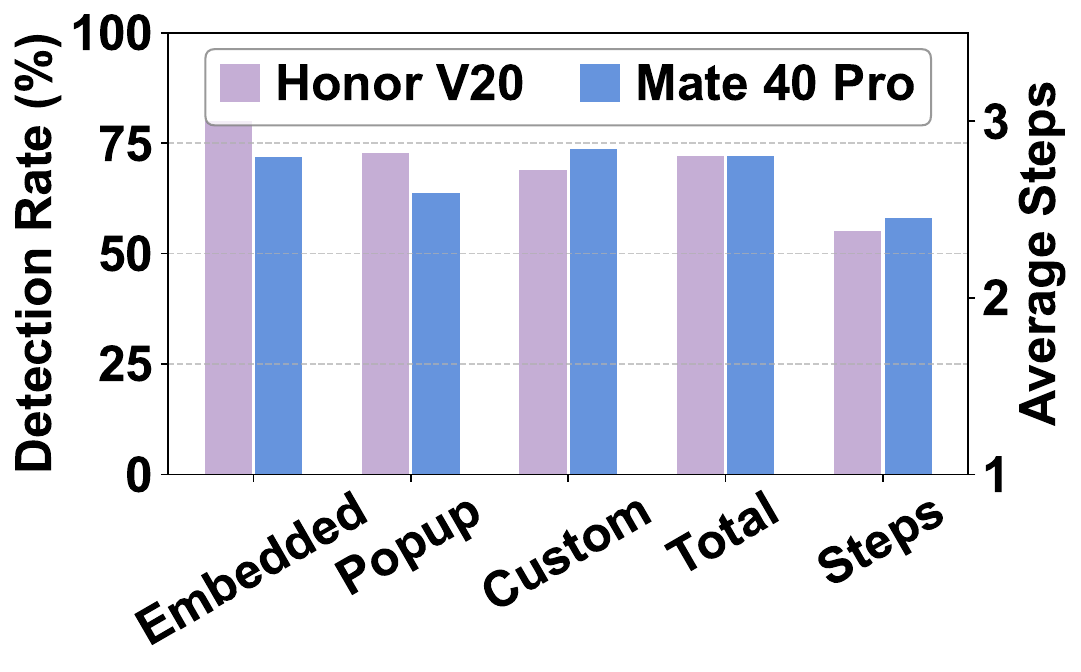}
         \vspace{-0.35cm}
    \caption{Phone comparison.}
    \label{fig:phone-comparison}
  \end{minipage}

  
  
\end{figure*}

\noindent \textbf{Impact of Visual Cues.}
As shown in Figure~\ref{fig:ablation-vlm}, adding the VLM significantly improves detection, especially for Embedded and Custom ads where explicit textual cues are limited and keyword/structure-based exploration falls short. By mapping subtle visual elements (e.g., icons, banners) to semantic meaning, the VLM boosts performance by $21.1\%$–$67.2\%$ in these categories, while gains for Popup ads are marginal given their already explicit interaction patterns.
%
%
In contrast, the impact on efficiency is less pronounced: the average steps \red{are only modestly reduced, indicating that} \blue{in the VLM-enabled setting is slightly increased. This outcome is due to the VLM's capacity to identify and explore extra interaction paths associated with visually subtle ads, which are otherwise overlooked. Therefore, the} VLM mainly expands semantic coverage rather than \red{drastically} shortening navigation paths.
Overall, these results confirm that visual clues are indispensable for capturing heterogeneous ad semantics, though its contribution lies primarily in robustness of detection rather than step-wise efficiency.

\subsection{Comparison across Reasoning Models}
We evaluate \systemname with three base models: GPT-4o-mini (8B), Llama-4 (17B), and Qwen-2.5 (7B). Figure~\ref{fig:llm-comparison} shows performance tracks reasoning ability more than model size. GPT-4o-mini performs best (74.3\% detection, 2.5 steps), Llama-4 is moderate (67.3\%, 2.5 steps), and Qwen-2.5 is lowest (60.4\%, 2.9 steps). Notably, GPT-4o-mini and Qwen-2.5 are similar in size, yet GPT-4o-mini is more accurate and efficient, reinforcing that cross-modal reasoning, not parameter count, is key for robust ad detection.

%
%
%
%

Figure~\ref{fig:tokens} decomposes token consumption into text-prompt, UI-element, and screenshot tokens. 
\blue{UI element tokens represent widget screenshots without metadata in hierarchy-based apps, whereas screenshot tokens represent full screen captures in canvas-rendered apps.}
Screenshot tokens dominate the overall budget, often one to two orders of magnitude larger than the other components, making them the primary cost driver. 
While text and UI tokens remain relatively stable across models, the variance in screenshot tokens aligns closely with performance, reflecting the importance of visual grounding. 
Our design reduces unnecessary image invocations by leveraging static profiling, metadata, and memory first, invoking screenshots only when necessary. 
This adaptive routing lowers the cost of vision-unrelated queries by 56.8\%–68.4\% in token usage compared with a VLM-first strategy, while maintaining or even improving coverage, thereby offering a more cost-efficient solution.

\blue{Furthermore, \systemnames performance is influenced by both the LLM and VLM. Qwen-2.5 and Llama-4 both use relatively low default image processing resolution, resulting in fewer screenshot and UI element tokens, as shown in Figure~\ref{fig:tokens}. However, Llama-4, being a larger 17B model with stronger reasoning ability, achieves better detection performance than Qwen-2.5, as reflected in Figure~\ref{fig:llm-comparison}. This illustrates a practical trade-off between default processing resolution, token consumption, and reasoning capability.}

\subsection{Comparison across Smartphones}
Figure~\ref{fig:phone-comparison} compares Honor V20 and Mate 40 Pro. \systemname maintains consistently high detection on both, with only minor category-wise differences. Mate 40 Pro is slightly more accurate on Embedded and Custom ads ($\approx2$–$3\%$), likely due to more stable rendering and screenshot processing, while Popup accuracy is essentially identical across devices. Average exploration steps are also similar ($\approx2.5$), suggesting \systemname is largely hardware-agnostic and remains reliable even on resource-constrained phones.
\begin{figure}[!t]
\includegraphics[width=0.48\textwidth]{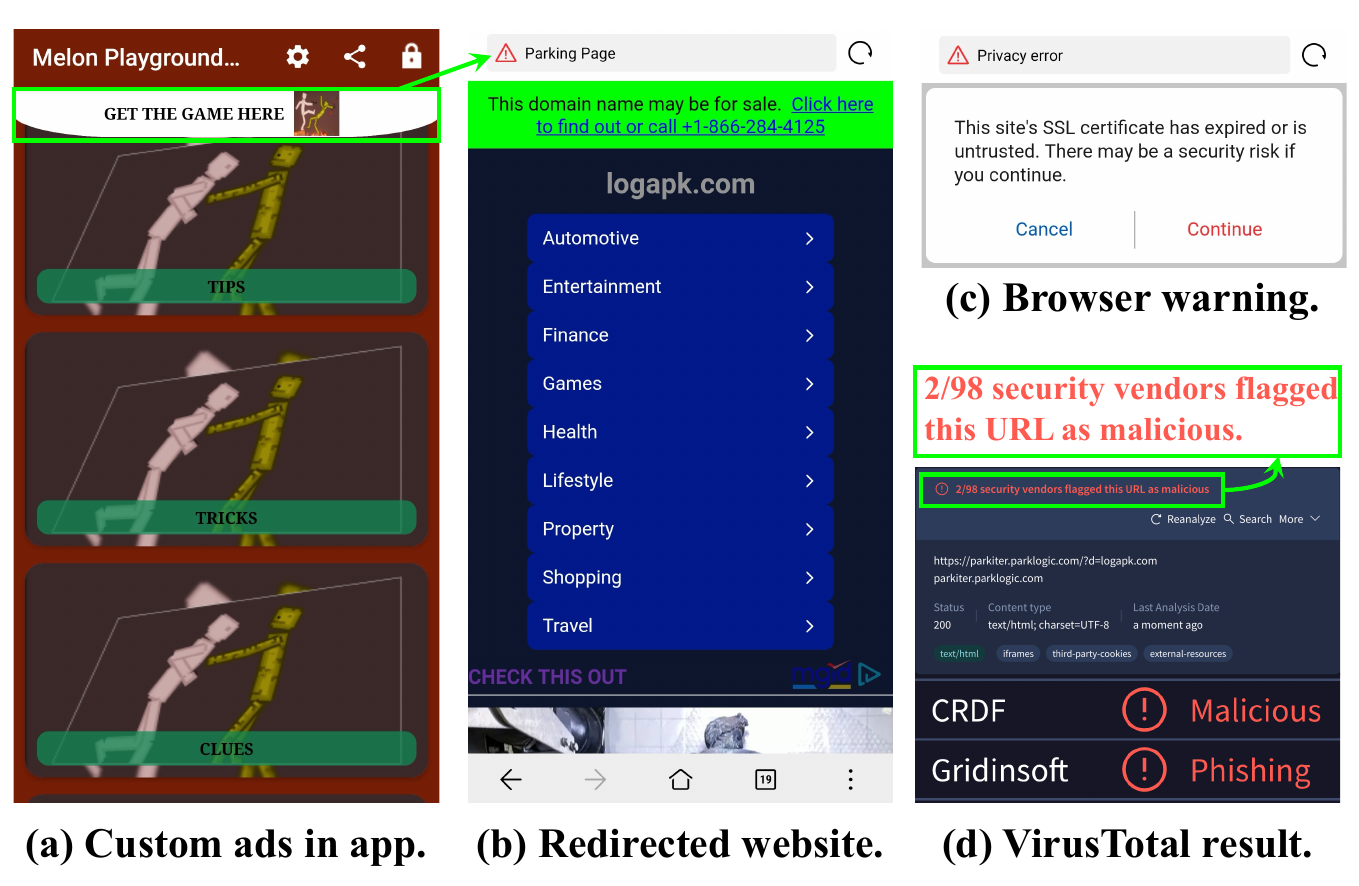}
  \vspace{-0.8cm}
  \caption{Automatic detection and analysis of malicious advertising by \systemname. (a) A custom in-app ad is detected. (b) The interaction triggers a redirection to a suspicious website. (c) The system captures the ensuing browser warning, and (d) submits the URL to VirusTotal, which confirms it as malicious.}
  \label{fig:malicious-ad}
  \vspace{-0.1cm}
\end{figure}
\begin{figure}[!t]
\includegraphics[width=0.48\textwidth]{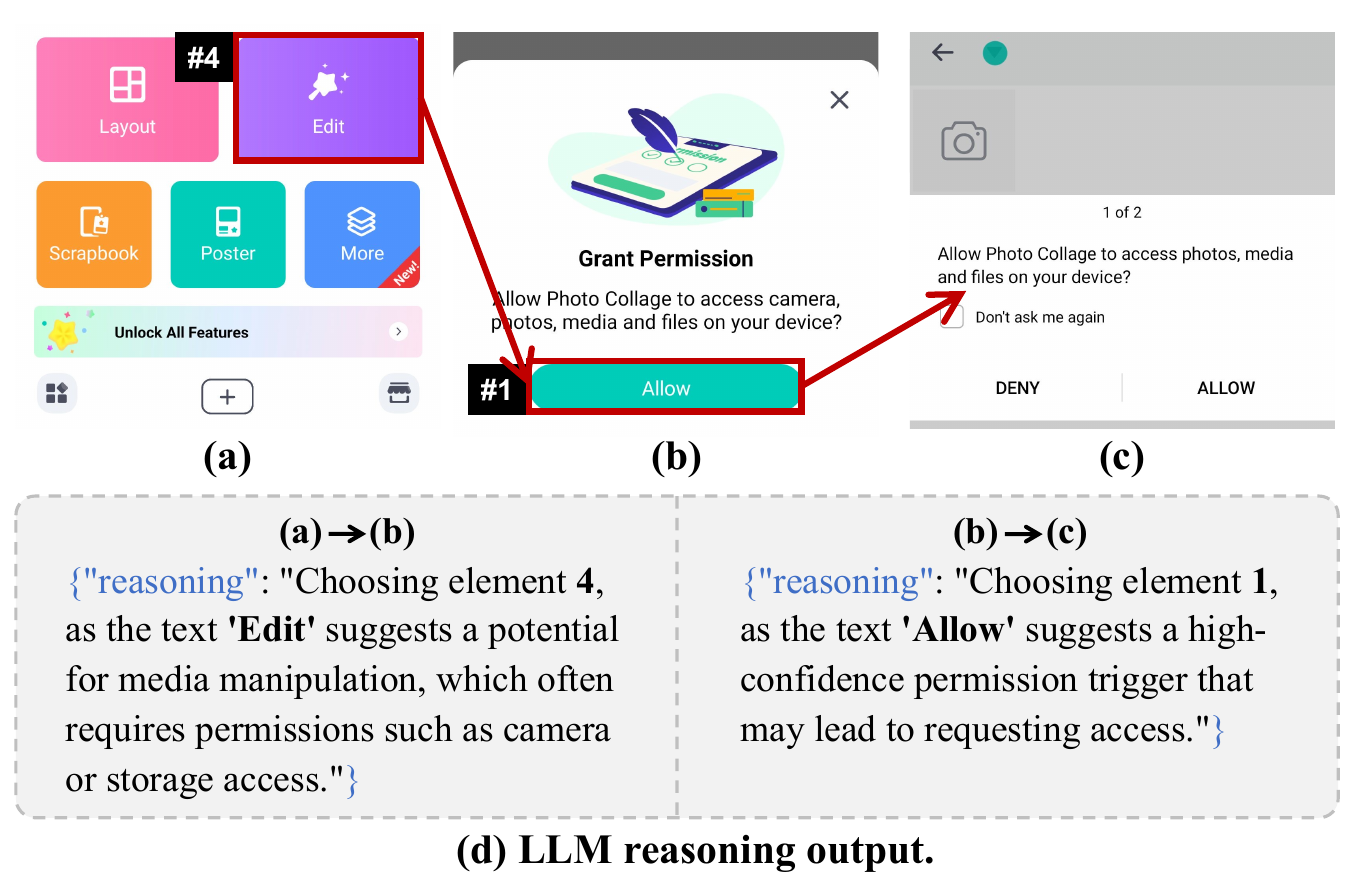}
  \vspace{-0.8cm}
  \caption{Case study on permission abuse.}
  \label{fig:permission}
  \vspace{-0.2cm}
\end{figure}

\subsection{Case Studies}
We present a set of case studies to illustrate the practical behaviors captured by \systemname. 
\label{sec:rq4}
\subsubsection{Malicious ad regulation. }
This case shows \systemnames ability to go beyond conventional ad detection and support malicious advertising oversight. Starting from an in-app custom ad (Figure~\ref{fig:malicious-ad} (a)), \systemname traces the user-triggered redirection to an external website (Figure~\ref{fig:malicious-ad} (b)), automatically captures subsequent browser warnings (Figure~\ref{fig:malicious-ad} (c)), and verifies the target URL using security services such as VirusTotal (Figure~\ref{fig:malicious-ad} (d)). This end-to-end capability highlights two key regulatory values: first, the ability to uncover deceptive ad flows that cross the app–web boundary and escape static analysis; second, the capacity to automatically link ad-triggering interactions to concrete evidence of security risks, such as phishing or malware distribution. By bridging detection with forensic validation, \systemname provides a practical foundation for auditing compliance with platform policies and industry standards, thereby extending its utility from ad transparency to mobile ecosystem security.

\subsubsection{Extending where-oriented reasoning to permission abuse.}
Figure~\ref{fig:permission} presents a case of UI-to-permission escalation. The app first displays a benign interface (Figure~\ref{fig:permission}(a)), where the Edit button seems harmless. Once selected, however, the workflow quickly escalates: a permission grant dialog (Figure~\ref{fig:permission}(b)) is followed by a system-level request for access to local photos and media (Figure~\ref{fig:permission}(c)).
\systemnames reasoning (Figure~\ref{fig:permission}(d)) highlights the analysis process. It interprets Edit not as a random click, but as a semantic cue likely linked to sensitive resources (\eg, storage or camera). This inference guides exploration ``toward'' the path where escalation may occur. When the Allow option appears, the system correctly recognizes it as a privilege-escalation trigger, thereby reconstructing the hidden chain from benign entry to sensitive permission abuse.
This case demonstrates that \systemnames where-oriented reasoning generalizes beyond advertising: by inferring UI paths likely to expose sensitive operations, it can also uncover evasive behaviors such as permission misuse.

\begin{figure}[!t]
\includegraphics[width=0.5\textwidth]{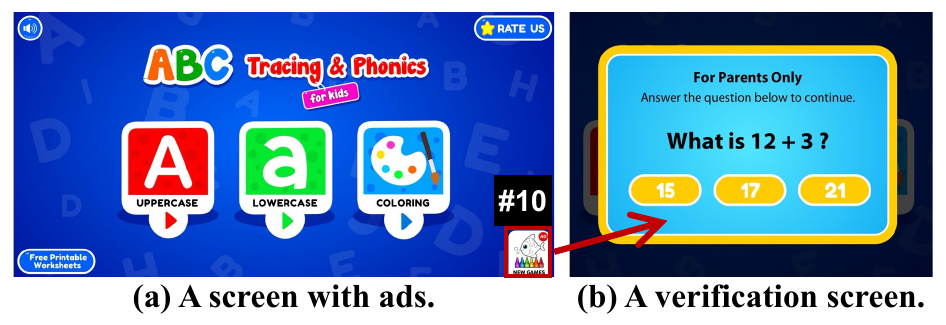}
  \vspace{-0.9cm}
  \caption{Representative failure case of \systemname.}
  \label{fig:failure-case}
\end{figure}

\subsubsection{When multimodal reasoning fails? }
As shown in Figure~\ref{fig:failure-case}, after clicking an ad button, the app jumps to a parental verification screen that requires solving a math puzzle. Unlike typical ad-related interfaces, this screen provides no visual or textual ad cues. The VLM extracts literal elements (e.g., ``15'' and ``21''), while the LLM often fails to infer that the puzzle must be solved to proceed. Thus, the agent tends to backtrack instead of progressing.
This is not a corner case: ad triggers can be placed behind puzzles, age checks, or multi-step verification flows. Handling such cases requires hybrid reasoning that infers the hidden task requirement and tracks the goal across steps. More broadly, this points to future work: moving from locating ``where ads might be'' to identifying ``what reasoning steps are needed'' to reveal them.

%% file: chapter/related_work.tex
\section{Related Work}
\noindent \textbf{Detecting Mobile Ads on Android. }
Research on mobile ad detection has evolved from static analysis, which inspects manifests, layouts, or bytecode to flag advertising SDKs~\cite{lee2019adlib, feldman2014manilyzer, lee2016hybridroid, xinyu2023andetect, yang2015static, chen2019storydroid, liu2022promal, xiao2019iconintent}, to dynamic exploration that uncovers ads at runtime through automated UI navigation. Static systems such as Gator~\cite{yang2015static}, Frontmatter~\cite{kuznetsov2021frontmatter}, and IconIntent~\cite{xiao2019iconintent} link callbacks to UI widgets, while later work (StoryDroid~\cite{chen2019storydroid}, Promal~\cite{liu2022promal}) builds transition graphs to model inter-/component interactions. Dynamic tools range from general-/purpose explorers (\eg, Monkey~\cite{monkey2023}, DroidBot~\cite{li2017droidbot}) to ad-specific strategies, including breadth-first exploration with HTTP hooking~\cite{liu2020maddroid}, network-augmented analysis~\cite{dong2018frauddroid}, keyword/metadata prioritization~\cite{ma2024careful}, and computer-/vision–based detection~\cite{cai2023darpa}. Overall, these approaches often rely on single-source cues and coarse navigation, making sparse, concealed, or visually ambiguous ads hard to uncover. 

\noindent \textbf{LLM/VLM-based Mobile GUI Agent. }
Recently, numerous LLM/VLM-based mobile agents have been proposed~\cite{liu2024make, zhao2025llm, wen2024autodroid,gou2025navigating,ariaui,qin2025ui,wan2024omniparser,you2024ferret}. Some focus on prompt-centric design, improving task execution via structured prompting and memory (\eg, AutoDroid~\cite{wen2024autodroid}, MobileGPT~\cite{lee2024mobilegpt}). Others adopt perception-enhanced strategies, leveraging grounding models to parse complex UIs (\eg, UGround~\cite{gou2025navigating}, Aria-UI~\cite{ariaui}, UITARS~\cite{qin2025ui}, OmniParser~\cite{wan2024omniparser}, Ferret-UI~\cite{you2024ferret}). While these systems show the promise of LLMs/VLMs for GUI automation, they largely target explicit task completion with dense, well-defined rewards. By contrast, mobile ad detection involves concealed objectives, limited observability, and sparse triggers. Our work instead targets ad-relevant states, facilitating discovery of hidden advertising behaviors and expanding agentic UI reasoning.

%% file: chapter/discussion.tex
\section{Discussion}
\blue
{\textbf{Scalability and deployment.}
\systemname leverages the general reasoning capabilities of LLMs and VLMs to achieve more effective ad detection, while incurring additional computation and latency. To mitigate this overhead, \systemname adopts selective VLM invocation and is instantiated with relatively compact models such as GPT-4o-mini, which opens up opportunities for future on device deployment to enhance privacy or for adopting more lightweight models to balance detection performance and latency in practical settings. For large scale analysis, mobile SoC clusters can be employed to enable parallel execution and improve throughput.}

\noindent \blue{\textbf{Robustness under practical constraints.}
\systemname operates under practical constraints shared by current LLM based systems. In scenarios involving CAPTCHAs or other human verification mechanisms, automated interaction may be impeded. Addressing such cases represents a broader challenge for LLM driven task automation, where systems must reason and act under explicitly restricted interaction channels. Exploring more robust strategies for handling these constraints remains an important direction for future research.}

%% file: chapter/conclusion.tex
\section{Conclusion}
In this paper, we present \systemname, an agentic multimodal framework that unifies static, visual, temporal, and experiential signals to enable robust ad detection in modern mobile apps. Experiments show that it outperforms existing approaches in both coverage and efficiency, while generalizing across apps through reusable trajectories. Beyond mobile advertising, our design points to a broader paradigm of multimodal reasoning–guided UI navigation for securing and testing mobile ecosystems, highlighting its potential as a versatile foundation for future research.

%% file: chapter/acknowledgement.tex
\section{Acknowledgment}
This research was supported in part by the National Natural Science Foundation of China under Grant No. 62432004, and by a grant from the Guoqiang Institute, Tsinghua University.

%% file: chapter/appendix.tex
\appendix

\section{Workflow}
\label{sec:workflow}
Before initiating online navigation, we perform an \emph{offline profiling} phase that consists of \emph{static analysis} and \emph{dynamic probing}.  
\subsection{Offline Profiling: Static Analysis}
The static analysis pipeline constructs three types of priors:
\emph{screen priors}, \emph{slot priors}, and \emph{trigger priors}.
Given an APK and its decompiled resources as input, we leverage Androguard and related toolchains to extract three types of ad-related priors.

Specifically, (i) \textbf{Screen prior}: we parse the manifest to collect all registered Activities and extract declared permissions and metadata, which are matched against known ad library prefixes.
\begin{lstlisting}[label={lst:screen_prior}]
parse manifest from {decompiledRes}, extract permissions and metadata
if permissions/metadata matches {config.sdk_prefixes}:
    record as {offlineResult.screen}
\end{lstlisting}
(ii) \textbf{Slot prior}: we traverse layout resource files and match View class names against a preconfigured set of ad SDK signatures to localize potential ad widgets. Using the compiled resource mapping table, string identifiers are further resolved into unique hexadecimal resource IDs.
\begin{lstlisting}[label={lst:slot_prior}]
iterate over each {layoutFile} in {decompiledRes}:
    parse XML tree as {viewTree}
    iterate over {node} in {viewTree}:
        if {node.class} matches {config.sdk_prefixes}:
            resolve resource ID using resource map, as {id_hex}
            record {node.class}, {id_hex} into {offlineResult.slot}
\end{lstlisting}
(iii) \textbf{Trigger prior}: we scan bytecode across all classes, and upon detecting ad SDK API invocations or listener callbacks, it recursively backtraces along the inheritance chain. Once a superclass matches a manifest-registered Activity, the corresponding code evidence is attributed to that Activity.  
This process transforms isolated code fragments into an Activity-centric contextual mapping.
\begin{lstlisting}[label={lst:trigger_prior}]
iterate over each {class} in {apkFile}:
    if {class} calls Ad APIs or implements Ad Listeners:
        extract method signatures as {clues}
        # Backtrace to find the owner Activity
        while {currClass} has superclass:
            if {currClass} is an Activity:
                associate {clues} with {currClass} in {offlineResult.trigger}
                break loop
            set {currClass} to {currClass.super}
\end{lstlisting}

\subsection{Offline Profiling: Dynamic Probing}
Following static analysis, we perform \emph{dynamic probing} by executing the app with DroidBot to construct a coarse UTG and collect system logs. Each UI transition is recorded as a timestamped event sequence. 
\begin{lstlisting}[label={lst:dynamic}]
run DroidBot on {apkFile}, record transition graph as {UTG} and system logs as {sysLogs}
\end{lstlisting}
Next, we extract ad-related network traffic by filtering log entries using keyword matching and known ad domains, yielding a set of timestamped requests.  
To derive \textbf{network prior}, we temporally correlate UI events with network requests within a sliding window $\Delta$, and associate triggering events with nearby ad traffic. 
\begin{lstlisting}[label={lst:network}]
iterate over {log} in {sysLogs}:
    if {log} matches keywords or known ad domains:
        extract {log.url} and {log.timestamp}, add to {adTraffic}
sort {UTG.events} and {adTraffic} by time
iterate over {event} in {UTG.events}:
    find {traffic} in {adTraffic} where time difference < $\Delta$
    if {traffic} exists:
        link {event} to {traffic} in {networkPrior}
\end{lstlisting}
Finally, we merge these network-level priors with the static priors to produce a unified offline knowledge base, which serves as structured guidance for subsequent online navigation.

\subsection{Multi-Source Prompt Construction}
\label{sec:prompt}
Based on the integrated offline knowledge, we further construct a \emph{multi-source structured prompt} to guide per-step decision making.  
At each state, the prompt aggregates four complementary information sources: \\(i) \textbf{current screen options}, which enumerate all actionable views with textual or visual descriptions; 
\begin{lstlisting}[label={lst:prompt-screen}]
iterate over {view} in {views}:
    combine {view.text}, {view.desc}, {view.vlm_desc}
    add to {prompt.screen}
\end{lstlisting}
(ii) \textbf{offline priors}, injecting activity-level, component-level, and method-level ad evidence derived from offline profiling; 
\begin{lstlisting}[label={lst:prompt-offline},columns=flexible,keepspaces=true,showstringspaces=false]
if {currentState.activity} in {priors.ad_activities}:
    add "Activity Listed as ad-related" to {prompt.offline}
if any {view} in {views} matches {priors.ad_components}:
    add "{view} is a potential Ad View" to {prompt.offline}
if {currentState.activity} in {priors.methods_by_activity}:
    get methods {m} from {priors.methods_by_activity}
    add "{m} are identified as Ad Methods" to {prompt.offline}
}
\end{lstlisting}
(iii) \textbf{strategic context}, capturing the local UTG neighborhood and recent interaction history to avoid redundant exploration; 
\begin{lstlisting}[label={lst:prompt-context},columns=flexible,keepspaces=true,showstringspaces=false]
get neighborhood nodes of {currentState} from {UTG} within 2 hops as {localNeighbors}
iterate over {node} in {localNeighbors}:
    add {node.id}, {node.visits}, {node.score} to {prompt.context}
get recent {k} steps from {history} as {recentPath}
add {recentPath} to {prompt.context}
\end{lstlisting}
and (iv) \textbf{past experiences}, retrieved from an experience database to reuse previously successful behaviors. 
\begin{lstlisting}[label={lst:prompt-exp}]
retrieve relevant experiences from {experienceDB} similar to {currentState} as {rel_exp}
iterate over {exp_item} in {rel_exp}:
    add {exp_item} to {exp}
\end{lstlisting}
These elements are concatenated into a unified, structured prompt, providing the LLM with a holistic and temporally grounded view of the current decision context.

\subsection{LLM-driven UI navigation}
Building upon the structured prompt, we employ an \emph{LLM-driven UI navigation policy} to actively trigger ad-related behaviors.  
At each step, the agent observes the current device state and first checks a success condition, where reaching a known ad-related Activity terminates the episode and records the corresponding trigger path. 
The discovered path is summarized into a reusable heuristic and stored in an experience database to guide future exploration.  
\begin{lstlisting}[label={lst:navigation-success}]
set {step} to 0
initialize {currentTrajectory} as empty
while {step} < {maxSteps}:
    get device state as {state}
    # 1. Check Success
    if {state.activity} matches {offlinePriors.success_activities}:
        summarize {currentTrajectory} as {heuristic}
        save {heuristic} to {experienceDB}
        restart {app}
        continue loop
\end{lstlisting}
If the app enters an abnormal state, a lightweight recovery procedure is applied to ensure robustness.
\begin{lstlisting}[label={lst:navigation-recovery}]
while {step} < {maxSteps}:
    # 1. ... 2. Recovery Mechanism
    if {app} is crashed or in background:
        perform "Back" or "Restart"
        continue loop
\end{lstlisting}
The agent then extracts all actionable UI elements, optionally enriching them with visual descriptions via a VLM. 
\begin{lstlisting}[label={lst:navigation-ui}]
while {step} < {maxSteps}:
    # 1,2. ... 3. Perception
    if app is hierarchy-apps:
        get views from Accessibility Service as {views}
    else if app is canvas-apps:
        capture screenshot
        run vision detector to segment {views} regions
        run VLM to annotate {views}
\end{lstlisting} 

Conditioned on the multi-source prompt constructed in Section~\ref{sec:prompt}, the LLM selects the next action and estimates its ad relevance. 
\begin{lstlisting}[label={lst:navigation-select}]
while {step} < {maxSteps}:
    # 1,2,3. ... # 4. Reasoning and Action
    execute function "ConstructPrompt" with {state}, {views}, {offlinePriors}... 
        get result as {prompt}
    query LLM with {prompt}, get {actionIdx} and {adScore}
    execute {action} on {device}
\end{lstlisting}
Finally, the UTG and interaction history are updated, and the selected action is executed, enabling iterative, experience-aware navigation until the step budget is exhausted.
\begin{lstlisting}[label={lst:navigation-update}]
while {step} < {maxSteps}:
    # 1,2,3,4. ... # 5. Update State
    update {history} with {state}, {action}
    if node, edge not in UTG:
        update node, edge in {UTG}
    update node score in {UTG} with {adScore}
    increment {step}
\end{lstlisting}

\section{Examples}
\label{sec:examples}
\subsection{Hierarchy-based apps w/o VLM}

\begin{figure}[!h]
\includegraphics[width=0.49\textwidth]{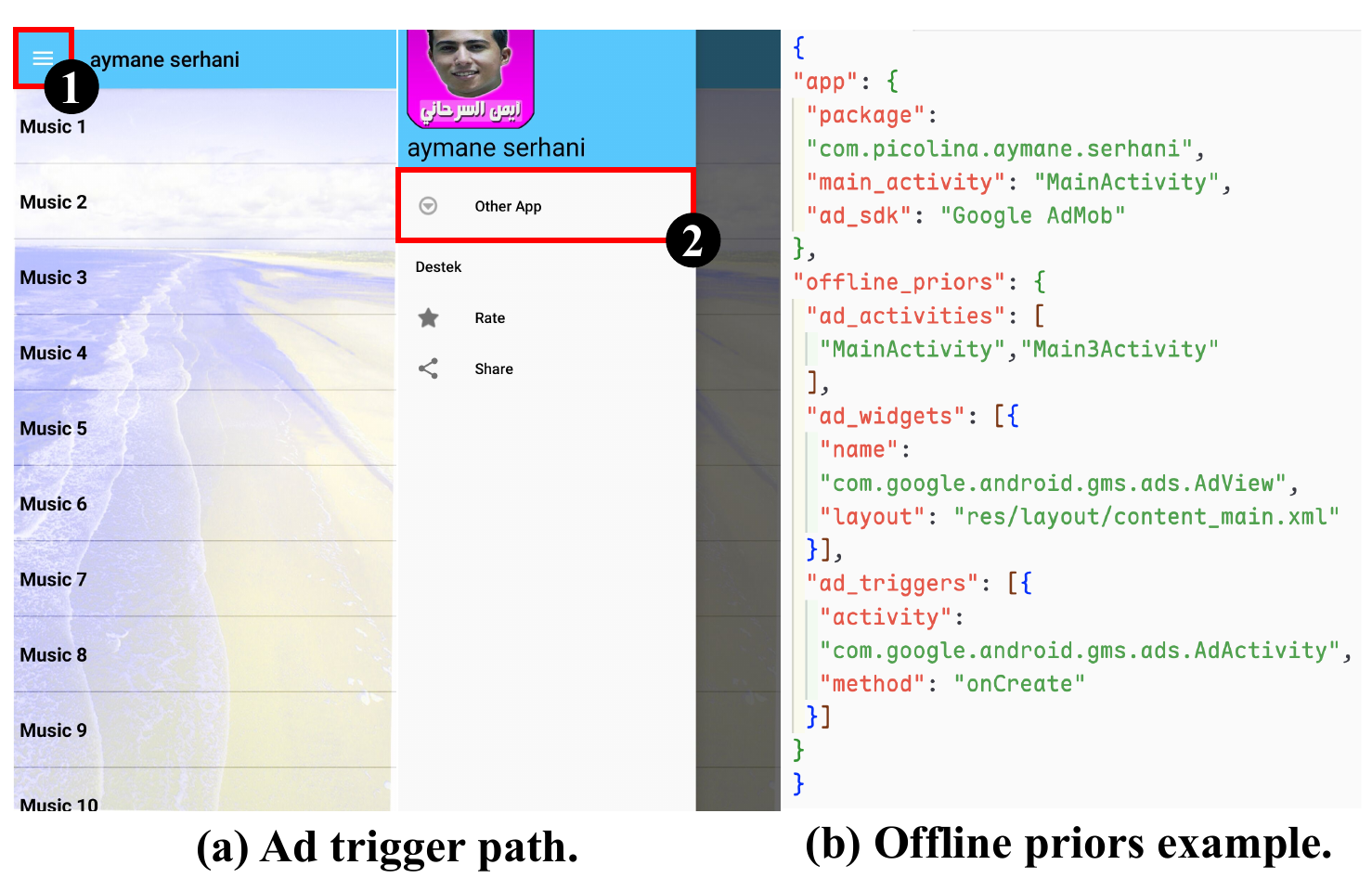}
  \vspace{-0.6cm}
  \caption{An example of a hierarchy-based app.}
  \label{fig:appendix-exp1}
\end{figure}

We first present a simple example of a \emph{hierarchy-based} app that does not require VLM assistance.  
Figure~\ref{fig:appendix-exp1}(a) illustrates the ad-triggering workflow in this app: \systemname first clicks (1) the navigation bar icon to open the side drawer, and then clicks (2) \emph{Other Apps}, which redirects to the Google Play Store, which is a classic ad-triggering path.  
Figure~\ref{fig:appendix-exp1}(b) shows a JSON example of the offline profiling results.

The detailed online navigation workflow of \systemname in this case is described as follows:
At the entry screen, \systemname constructs the prompt as shown below.  
Specifically, \systemname derives the \emph{current screen options} from the view tree provided by the Android Accessibility Service, and builds the \emph{strategic context} using the UTG and interaction history.  
Since this is the initial state, the interaction history only contains the app launch event.  
Finally, relevant past experiences are retrieved, where the first entry highlights navigation elements that exactly match the ad-triggering path in this example.
For clarity, we present the first prompt with a relatively complete information structure, while subsequent prompt examples only include the most critical elements involved in decision making.

\begin{lstlisting}[style=prompt, label={lst:prompt}, escapeinside={(*}{*)}
]
(*\textbf{[System prompt.]}*) You are an agent designed to ...
(*\textbf{[Integrated prompt.]}*)
(*\underline{1. Current Screen Options}*)
- View 0: Type='ImageButton', Text='Open navigation drawer'
- View 1: Type='ListView', Res-ID='com.picolina.aymane.serhani:id/list1'
- View 2: Type='LinearLayout', Text='Music 1'
- ...
- View 11: Type='LinearLayout', Text='Music 10'
- View 12: Type='BackButton', Text='[BACK] Return to previous screen'
(*\underline{2. Static App Knowledge}*)
[Activity Match] Current activity 'MainActivity' is listed as ad-related.
[Activity Match] This activity contains potential ad trigger(s) in method(s): ['onCreate'].
[Component Match] A component with resource_id 'com.picolina.aymane.serhani:id/adView' is a known ad container.
[General Info] App uses ad libraries: ['Google AdMob']
(*\underline{3. Strategic Context}*)
(a) Annotated Local Map (from UTG, 2-hop neighborhood)
Current State[4662c1] (visited: 1 times), ad_score: 0.05
**Reachable in 1-hop(s):**
- State: [38961e], event: 'KeyEvent(state=4662c1, name=BACK)', ad_score: 0.10 (visited: 0 times)
- State: [09aff5], event: 'TouchEvent(state=4662c1, view=[button alt='Open navigation drawer' bound_box=0,102,168,270][/button])', ad_score: 0.10 (visited: 0 times)
- State: [b1a47a], event: 'TouchEvent(state=4662c1, view=[0,270,1080,480-LinearLayout-])', ad_score: 0.10 (visited: 0 times)
- ...
**Reachable in 2-hop(s):**
- State: [5c7efb], event: 'TouchEvent(state=79130a, view=[389,622,691,718-ImageButton-])', ad_score: 0.10 (visited: 0 times)
- State: [e3f97e], event: 'KeyEvent(state=401904, name=BACK)', ad_score: 0.10 (visited: 0 times)
(b) Recent History
- Step 1 [166c5c] -> [4662c1] RestartAppEvent()
(*\underline{4. Past Experiences}*)
- Interacting with music playback or navigation elements often triggers promotional ads in media apps.
- Interacting with promotional banners or offers related to specific age groups or financial products often triggers advertisements.
- Interacting with buttons or views that suggest accessing additional content or resources often triggers advertisement offers.
\end{lstlisting}
Given this prompt, the LLM produces the following action selection and corresponding reasoning:
\begin{lstlisting}[style=prompt, label={lst:prompt}, escapeinside={(*}{*)}
]
{"reasoning": "Choosing element 0, as the 'Open navigation 
drawer' button may lead to additional content or promotional 
offers, which could include advertisements.",
"ad_score": 0.05,
"choice": 0}"
\end{lstlisting}
By clicking the navigation button, the app opens the side drawer and transitions to a state that contains ad-related buttons.  
At this point, we query the LLM again with the following prompt, where only the most salient information is shown for brevity.
\begin{lstlisting}[style=prompt, label={lst:prompt}, escapeinside={(*}{*)}
]
(*\textbf{[System prompt.]}*) You are an agent designed to ...
(*\textbf{[Integrated prompt.]}*)
(*\underline{1. Current Screen Options}*)
- View 0: Type='LinearLayoutCompat', Text='N/A'
- View 1: Type='CheckedTextView', Text='Other App'
- View 2: Type='LinearLayoutCompat', Text='Rate'
- ...
(*\underline{2. Static App Knowledge}*) ...
(*\underline{3. Strategic Context}*)
(a) Annotated Local Map (from UTG, 2-hop neighborhood) ...
(b) Recent History
- Step 1 [166c5c] -> [4662c1] RestartAppEvent()
- Step 2 [4662c1] -> [09aff5] TouchEvent(state=4662c1, view=[button alt='Open ...ation drawer' bound_box=0,102,168,270][/button])
(*\underline{4. Past Experiences}*)
- Interacting with buttons or views that suggest accessing additional content or resources often triggers advertisement offers.
- ...
\end{lstlisting}
The LLM decides to click the \emph{Other App} button, which triggers an app-promotion transition.
\begin{lstlisting}[style=prompt, label={lst:prompt}, escapeinside={(*}{*)}
]
{"reasoning": "Choosing element 1, as the text 'Other App' suggests a potential for ad-related content or promotions, which aligns with the heuristics for identifying ad triggers.",
"ad_score": 0.05,
"choice": 1}"
\end{lstlisting}
After successfully triggering the ad, \systemname prompts the LLM to summarize the interaction history and distill it into a corresponding experience.
\begin{lstlisting}[style=prompt, label={lst:prompt}, escapeinside={(*}{*)}
]
You are an expert Android app tester specializing in identifying ad-triggering patterns. Your task is to ...
Step 1: Touched a 'ImageButton' with text/desc: 'Open navigation drawer'.
Step 2: Touched a 'CheckedTextView' with text/desc: 'Other App'.
\end{lstlisting}
In this case, the distilled experience is shown below, which is consistent with our intuition.
\begin{lstlisting}[style=prompt, label={lst:prompt}, escapeinside={(*}{*)}
]
Interacting with navigation options that lead to external app suggestions often triggers advertisement displays.
\end{lstlisting}

\subsection{Hierarchy-based apps with VLM}
\begin{figure}[!h]
\includegraphics[width=0.49\textwidth]{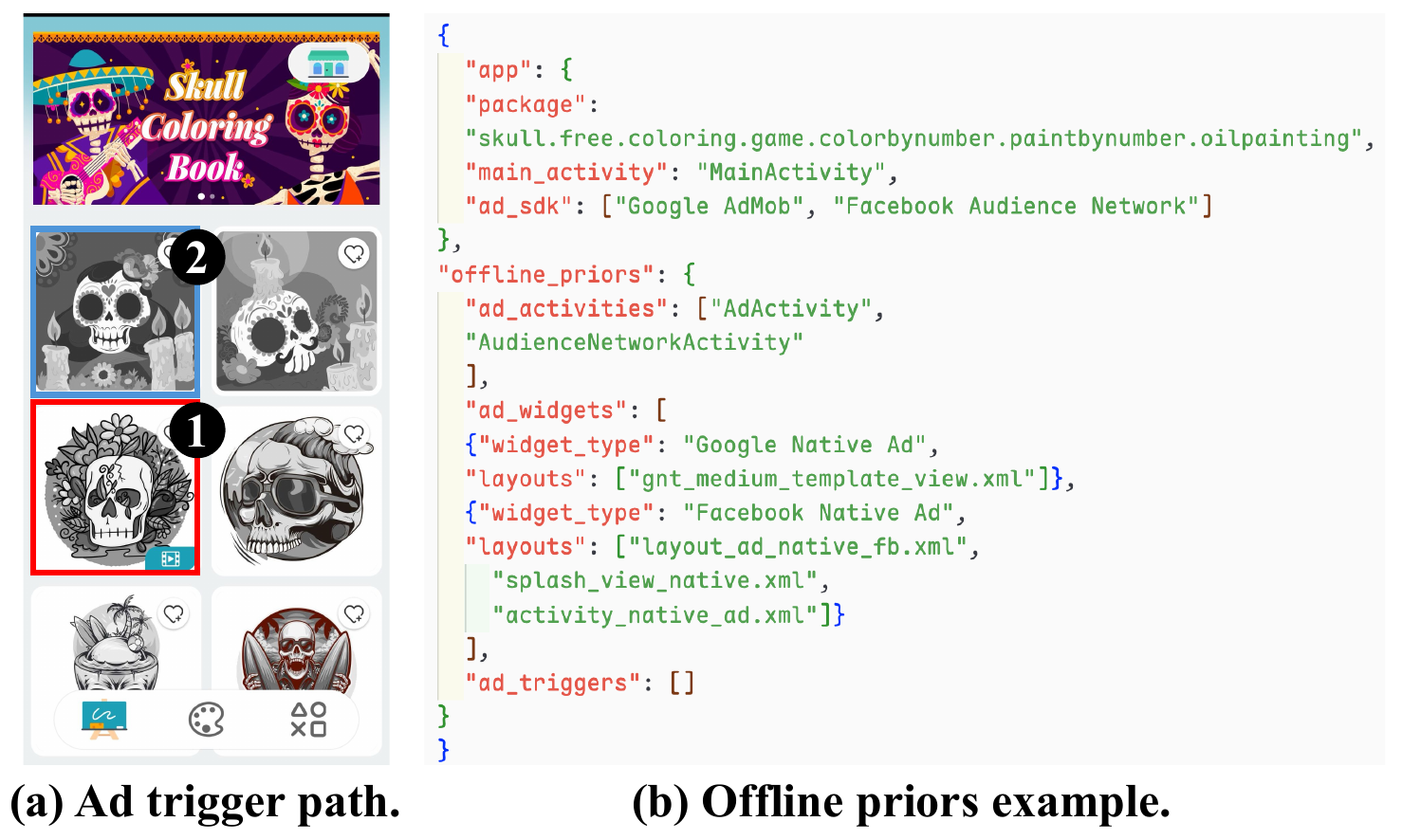}
  \vspace{-0.6cm}
  \caption{An example of a hierarchy-based app with VLM-annotated UI element.}
  \label{fig:appendix-exp2}
\end{figure}
This example illustrates a case where \systemname leverages a VLM to enrich UI elements with semantic descriptions.
Figure~\ref{fig:appendix-exp2} shows the detailed scenario. At the entry screen, there already exists an element (\ie, \circlednum{1}) that triggers an advertisement video.
However, these \texttt{ImageView} elements (\ie, \circlednum{1} and \circlednum{2}) lack content descriptions in the view tree, making it difficult to distinguish among different image components.
Therefore, \systemname first employs a VLM to annotate cropped screenshots of UI elements with semantic labels.
\begin{lstlisting}[style=prompt, label={lst:prompt}, escapeinside={(*}{*)}]
You are an expert mobile ad detector. Analyze this UI
screenshot and label the element according to the following
categories: 
[AD]: ..., [POTENTIAL_AD]: ..., [UI_ELEMENT]: ...
\end{lstlisting}
For these \texttt{ImageView} elements, the VLM produces the following semantic annotations.
\begin{lstlisting}[style=prompt, label={lst:prompt}, escapeinside={(*}{*)}]
# For element #1:
[AD] A button with a play icon likely for a video advertisement.
# For element #2:
[UI_ELEMENT] A decorative skull and candles, likely part of the app's theme.
\end{lstlisting}
Based on these semantic annotations, we construct the following query prompt for the LLM.
\begin{lstlisting}[style=prompt, label={lst:prompt}, escapeinside={(*}{*)}]
(*\textbf{[System prompt.]}*) You are an agent designed to ...
(*\textbf{[Integrated prompt.]}*)
(*\underline{1. Current Screen Options}*)
- View 0: Type='ViewPager', Text='More You have no works yet, go to color your paintings!'
- View 1: Type='ViewPager', Text='N/A', Res-ID='None'
- View 2: Type='RecyclerView', Text='N/A', Res-ID='None'
- View 3: Type='ImageView', Text='[POTENTIAL_AD] A speech bubble with a storefront icon, possibly indicating a virtual store or in-app purchase option.'
- ...
- View 8: Type='ImageView', Text='[UI_ELEMENT] A decorative skull and candles, likely part of the app's theme.'
- View 9: Type='ImageView', Text='[AD] A button with a play icon likely for a video advertisement.'
- ...
(*\underline{2. Static App Knowledge}*) ...
(*\underline{3. Strategic Context}*)
(a) Annotated Local Map (from UTG, 2-hop neighborhood) ...
(b) Recent History ...
(*\underline{4. Past Experiences}*) ...
\end{lstlisting}
Based on the enriched semantic descriptions of UI elements, the LLM selects element~\circlednum{1} for interaction, which triggers an interstitial video advertisement.
\begin{lstlisting}[style=prompt, label={lst:prompt}, escapeinside={(*}{*)}
]
{"reasoning": "Choosing element 9, as it is a clear ad indicator with a play icon, which is a high-confidence ad trigger.",
"ad_score": 0.8,
"choice": 9}"
\end{lstlisting}
\subsection{Canvas-based apps}
\begin{figure}[!h]
\includegraphics[width=0.49\textwidth]{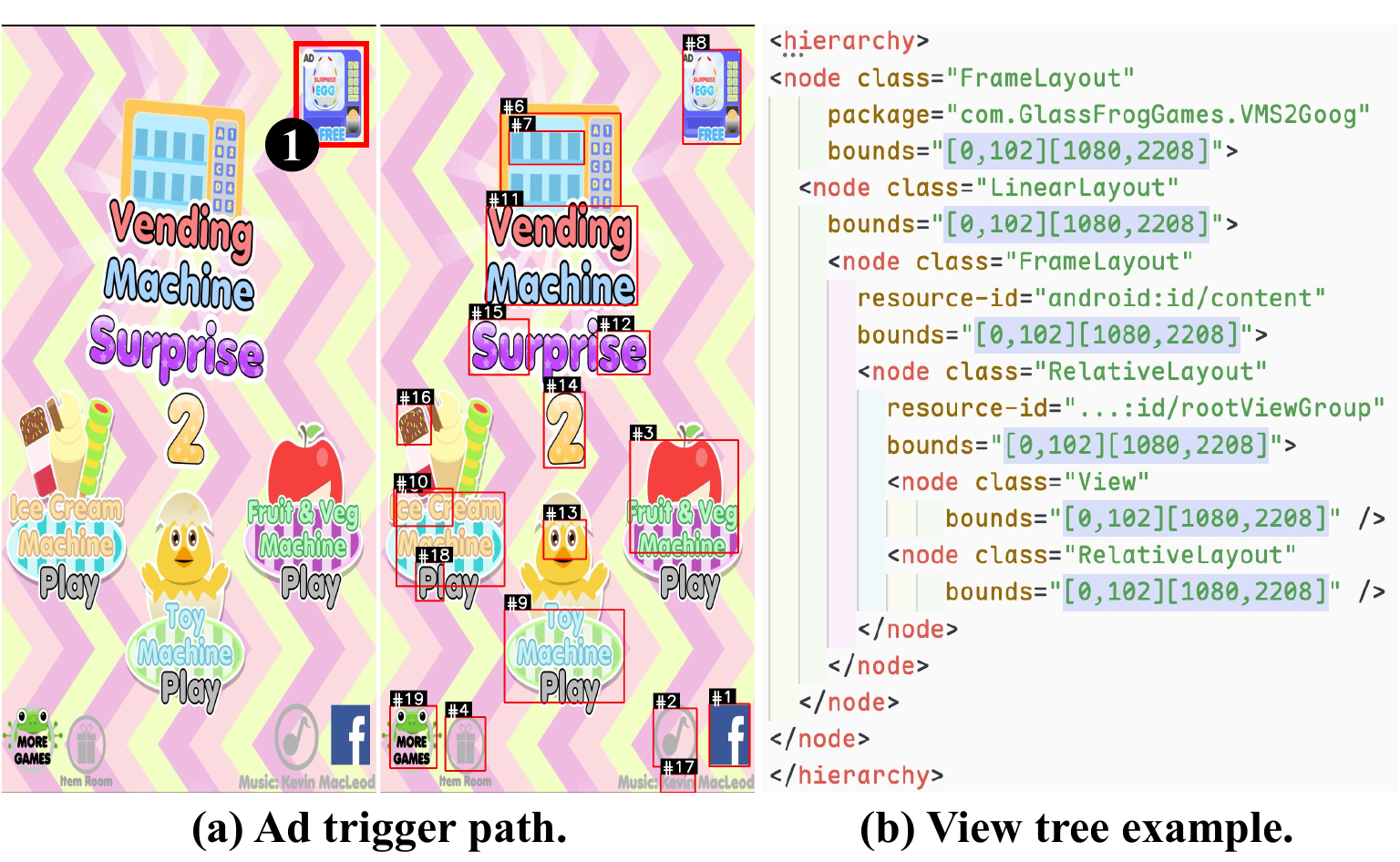}
  \vspace{-0.6cm}
  \caption{An example of a canvas-based app.}
  \label{fig:appendix-exp3}
\end{figure}

This example illustrates a canvas-rendered case. As shown in Figure ~\ref{fig:appendix-exp3}(b), the view tree obtained via the Android Accessibility Service contains only the bounding box of the entire canvas, without fine-grained semantic information for individual UI elements. 
\systemname first applies a hybrid vision detector to identify coarse-grained UI component regions, as illustrated in Figure ~\ref{fig:appendix-exp3}(a). 
Subsequently, a VLM is leveraged to generate semantic annotations for each detected component.
\begin{lstlisting}[style=prompt, label={lst:prompt}, escapeinside={(*}{*)}]
You are an expert Android UI analyst specializing in ad detection. Your task is to ...
Your output MUST be a valid JSON array, where each object contains the 'id' (the number from the box) and 'description' (your analysis, with annotations if necessary) ...
\end{lstlisting}
For Figure~\ref{fig:appendix-exp3}(a), the VLM provides the following annotation.
\begin{lstlisting}[style=prompt, label={lst:prompt}, escapeinside={(*}{*)}]
[{"id": 1, "description": "A blue Facebook icon indicating social media sharing."}, ...
{"id": 8, "description": "[AD HINT] A large promotional image featuring a 'FREE' surprise egg."}, ...
{"id": 19, "description": "[AD HINT] A round button featuring a frog and the label 'More Games'."}]"
\end{lstlisting}
Based on the above VLM annotation, we construct a corresponding current screen options prompt that incorporates semantic information.
\begin{lstlisting}[style=prompt, label={lst:prompt}, escapeinside={(*}{*)}]
(*\textbf{[System prompt.]}*) You are an agent designed to ...
(*\textbf{[Integrated prompt.]}*)
(*\underline{1. Current Screen Options}*)
- View 0, Text='A blue Facebook icon indicating social media sharing.'
- ...
- View 8, Text='[AD HINT] A large promotional image featuring a 'FREE' surprise egg.'
- ...
(*\underline{2. Static App Knowledge}*) ...
(*\underline{3. Strategic Context}*)
(a) Annotated Local Map (from UTG, 2-hop neighborhood) ...
(b) Recent History ...
(*\underline{4. Past Experiences}*) ...
\end{lstlisting}
Based on these key cues, \systemname successfully selects the icon that triggers an advertisement.
\begin{lstlisting}[style=prompt, label={lst:prompt}, escapeinside={(*}{*)}]
{"reasoning": "Choosing element 7, as it features a large promotional image with the text 'FREE', which is a high-confidence ad trigger indicating a potential advertisement opportunity.",
"ad_score": 0.8,
"choice": 7}"
\end{lstlisting}

\section{Mathematical Formalization and Coverage Analysis}
\label{app:mathematical}

We formulate mobile ad detection as a sequential decision-making problem under partial observability, where an autonomous explorer interacts with a target application to uncover ad-related UI states.

\subsection{Environment and Interaction}
An Android application is modeled as an unknown UI transition system
$\mathcal{E} = (\mathcal{S}, \mathcal{A}, \mathcal{T})$,
where $\mathcal{S}$ denotes the set of latent UI states (activities or screens),
$\mathcal{A}$ denotes the set of executable UI actions (e.g., click, back),
and $\mathcal{T}: \mathcal{S} \times \mathcal{A} \rightarrow \mathcal{S}$ is the state transition function. A target \emph{ad trigger instance} is a tuple $\alpha = (s, \tau)$, where $s \in \mathcal{S}$ is a latent state in which an ad is exposed, $\tau$ denotes the required interaction context required to trigger the ad.
Let $\mathcal{A}_{\mathrm{ads}}$ denote the set of all such instances.

\textbf{Augmented State Representation (\systemname).}
Standard baselines (e.g., DroidBot) typically rely on a restricted observation space, such as the UI view hierarchy $M_t$. In contrast, MANA operates on an \textit{Augmented Context} $Z_t$, integrating heterogeneous signals to resolve ambiguity:
\begin{equation}
    Z_t = \Phi(M_t, V_t, \Sigma, H_t, \mathcal{E}_{mem})
\end{equation}
where:
\begin{itemize}
    \item $M_t$: Structural metadata from Android Accessibility Service (widget hierarchy, text).
    \item $V_t$: Optional visual features (e.g., screenshots processed by VLM) for canvas-rendered UIs.
    \item $\Sigma$: Static analysis priors derived from offline profiling (Screen, Slot, and Trigger priors).
    \item $H_t$: The interaction history sequence, providing temporal grounding.
    \item $\mathcal{E}_{mem}$: Retrieved cross-app experiences from the memory bank.
\end{itemize}
This enriched context is the primary input to the LLM-based policy $\pi_{\text{MANA}}$.

\subsection{Interaction History and UI Transition Graph}
\textbf{Interaction History.}
\systemname maintains a bounded interaction history
\[
H_t = \{(o_i, a_i)\}_{i = t - L_t}^{t-1}.
\]
where $K_t$ is a dynamic window size that adapts based on recent state diversity. Specifically, let $\mathcal{U}_t$ denote the set of unique UI states observed in the most recent $K_\text{base}$ interactions. Then the adaptive window size is defined as
\[
K_t =
\begin{cases}
\lceil 1.5 \cdot K_\text{base} \rceil, & \text{if } |\mathcal{U}_t| \le 2, \\
K_\text{base}, & \text{otherwise}.
\end{cases}
\]
The history buffer stores tuples of the form
\[
(o_i, a_i, o_{i+1}, r_i, \tau_i),
\]
where $r_i$ indicates whether an ad was observed during this transition, and $\tau_i$ denotes the timestamp. The interaction history $H_t$ serves as the temporal grounding context, guiding \systemname to avoid loops or repeated exploration of the same UI states.

\noindent\textbf{UI Transition Graph (UTG).}
\systemname incrementally constructs a UI Transition Graph
$\mathcal{G}_t = (V_t, E_t)$,
where each node $v \in V_t$ corresponds to an UI state,
and each directed edge $(v, a, v') \in E_t$ represents an observed transition
triggered by action $a$.
Each node $v$ stores:
(i) structural metadata (activity name, widget hierarchy),
(ii) semantic summaries inferred by the language model,
and (iii) an accumulated ad-relevance belief score $S(v)$.

Upon observing a transition $(o_t, a_t, o_{t+1})$,
the UTG is updated as:
\[
V_{t+1} = V_t \cup \{v_{t+1}\}, \quad
E_{t+1} = E_t \cup \{(v_t, a_t, v_{t+1})\}.
\]
\noindent\textbf{Ad-Relevance Scoring.}
At each step, the language model estimates an instantaneous ad relevance
$\hat{s}_t \in [0,1]$ based on $(o_t, H_t, \mathcal{G}_t)$.
The belief score of the current node is updated via an exponential moving average:
\[
S_t = (1-\alpha) S_{t-1} + \alpha \hat{s}_t,
\]
where $\alpha \in (0,1)$ controls temporal smoothing.

\subsection{The Coverage Objective and the Challenge of Aliasing}
\systemnames goal is to maximize the discovery of distinct ad trigger instances within a finite interaction budget $T$. We define the \emph{ad coverage} of a policy $\pi$ as the set:
\[
\mathcal{C}_{\text{ad}}(\pi, T) = \{ \alpha \in \mathcal{A}_{\text{ads}} \mid \alpha \text{ is triggered by } \pi \text{ within } T \}.
\]
The objective is implicitly to maximize $\mathbb{E}[|\mathcal{C}_{\text{ad}}(\pi, T)|]$.

Standard baseline policies (e.g., $\pi_{\text{base}}$) that rely solely on the structural metadata $M_t$ suffer from \textbf{Structural Aliasing}. We define the \emph{Structural Equivalence Class} for a latent state $s$ as:
\[
[s]_{\text{struct}} = \{ s' \in \mathcal{S} \mid M(s') = M(s) \}.
\]
In content-rich apps, distinct latent states often share identical UI structures (e.g., distinct pages in a dictionary sharing the same layout). A policy $\pi_{\text{base}}(a \mid M_t)$ cannot distinguish between them, leading to \textbf{probabilistic loops}: the agent may perpetually interact within the same equivalence class without progressing to deeper, unseen states or ad-triggering logic (e.g., cyclic navigation by repeatedly clicking the \emph{next word} button in a dictionary app). This directly limits achievable coverage.

\subsection{\systemname: Implicit Decision Criterion}
To reason about how MANA resolves aliasing and achieves broader coverage,
we introduce an \emph{implicit decision criterion} that analytically
characterizes the preference underlying the LLM's action selection.
Concretely, we define
\begin{equation}
    J(\pi) \doteq
    \mathbb{E} \left[
    \sum_{t=0}^{T}
    \left(
    R_{\text{sem}}(Z_t)
    - \lambda \cdot N(v_{next})
    \right)
    \right],
\end{equation}
which serves as an analytical characterization of the trade-off guiding the
LLM's per-step decisions.

Here, the criterion balances two complementary signals derived from the augmented
context $Z_t$:
\begin{itemize}
    \item \textbf{Semantic Gain ($R_{\text{sem}}$):} An estimated utility indicating
    whether an action is likely to lead to ad-related logic, inferred from textual
    and visual semantics in $Z_t$ (e.g., prioritizing semantically meaningful navigation
    over repetitive interactions).
    \item \textbf{Exploration Penalty ($N$):} A penalty derived from visitation counts
    in the UI Transition Graph (UTG), discouraging redundant traversals of structurally
    identical states.
\end{itemize}

\subsection{Theorem: Coverage Dominance}
\begin{lemma}[Loop Escape Implies Coverage Expansion]
Escaping a structural loop that is absorbing under $\pi_{base}$
strictly increases the set of reachable latent states and ad triggers
under a finite interaction budget.
\end{lemma}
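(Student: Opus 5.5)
The plan is to make the informal terms precise using the formalization above, then give a short set-inclusion argument. A \emph{structural loop} will be a set $L \subseteq [s]_{\text{struct}}$ of latent states, possibly a union of several structural equivalence classes. It is \emph{absorbing under $\pi_{base}$} if, once the trajectory enters $L$, every action in the support of $\pi_{base}(\cdot\mid M_t)$ keeps the next state $\mathcal{T}(s,a)$ inside $L$. Given that $\pi_{base}$ conditions only on $M_t$, which is constant on $[s]_{\text{struct}}$, its action distribution is identical at every state of $L$. Absorption is therefore a property of the class, not of individual latent states.

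\textbf{Step 1: the baseline is confined.} If $\pi_{base}$ enters $L$ at time $t_0 \le T$, induction on $t$ with the transition function shows $s_t \in L$ for all $t_0 \le t \le T$. Hence the reachable set $R_{base}(T)$ is contained in the states visited before $t_0$ together with $L$. Consequently, every ad trigger $\alpha=(s,\tau)$ in $\mathcal{C}_{\text{ad}}(\pi_{base},T)$ satisfies either $s\in L$ or $s$ was reached before $t_0$.

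\textbf{Step 2: escape adds a state.} Escaping means the policy in question, e.g.\ $\pi_{\text{MANA}}$, which conditions on $Z_t$ and therefore on $H_t$ and the visit counts $N(v)$ in $\mathcal{G}_t$, selects at some step $t_1<T$ an action $a$ with $\mathcal{T}(s_{t_1},a)=s'\notin L$. The penalty term $-\lambda N(v_{next})$ in $J(\pi)$ is what motivates this choice. Since $s'$ is not in the baseline's reachable set along the loop, the reachable set strictly grows: $R_{base}(T)\subsetneq R(T)$, because $s'$ is new and the prefix before $t_0$ is shared.

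\textbf{Step 3: ad triggers.} For the ``ad triggers'' part, the plan is to assume the non-degeneracy condition implicit in the lemma: the escaped region contains some $\alpha=(s',\tau)\in\mathcal{A}_{\mathrm{ads}}$ whose context $\tau$ can be completed within the remaining budget $T-t_1$. Under this condition $\alpha\in\mathcal{C}_{\text{ad}}(\pi,T)\setminus\mathcal{C}_{\text{ad}}(\pi_{base},T)$, and the strict increase follows. Without this assumption, only monotone ($\supseteq$) inclusion holds for triggers.

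\textbf{Main obstacle: strictness with a finite budget.} Escape alone does not guarantee a new trigger inside the budget, and the escaping policy also spends steps that the baseline spent inside $L$. I would handle this in two ways. First, the comparison is against the baseline's coverage, which gains nothing inside $L$ after $t_0$, so the steps the escaping policy spends inside $L$ cost nothing relative to the baseline. Second, I would state explicitly the hypothesis that an ad-bearing state lies within reach $T-t_1$ outside $L$. If $\mathcal{T}$ is stochastic rather than deterministic, the same argument gives the result in expectation: reachable sets become supports, and strictness needs positive escape probability.
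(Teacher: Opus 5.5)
Your route is the paper's own. Its proof is the single observation that $\pi_{base}$ stays confined to $\mathcal{S}_{loop}$ while any policy that exits it expands the reachable set, and your Steps 1--2 are that observation with the definitions filled in. Your closed-set notion of absorption is stronger than the paper's ``confined with non-zero probability mass''; that is what makes the induction in Step 1 clean. You are also right that the ad-trigger half needs an extra hypothesis: the paper's one-liner never treats triggers separately, and only the subsequent theorem adds a reachability assumption. The gap is in the strictness claims, namely the strict inclusion of $R_{base}(T)$ in $R(T)$ in Step 2, and your assertion in Step 3 that $\supseteq$ holds for triggers even without the extra assumption. Finding one state or trigger that the escaping run reaches and the baseline does not only shows the sets differ. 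A strict increase also needs the escaping run to reach everything the baseline reaches, and you never establish that.

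Concretely, two things are assumed rather than derived. First, $s'\notin L$ does not make $s'$ new. It may lie in the prefix $P$ of states visited before $t_0$, e.g.\ when escaping via \texttt{back} to the screen from which the loop was entered; then nothing is added. Second, and more seriously, absorption only says the baseline never \emph{leaves} $L$. It does not say the baseline stops reaching new latent states or triggers \emph{inside} $L$ during $(t_1,T]$. In the paper's own dictionary example, every ``next word'' page is a distinct latent state of the same structural class. Your Step 1 also explicitly allows baseline triggers with $s\in L$, and a run that leaves at $t_1$ can miss these. Your remark that the baseline ``gains nothing inside $L$ after $t_0$'' is precisely the missing hypothesis, stated as if it followed from absorption. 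Likewise, ``the prefix before $t_0$ is shared'' is a coupling of two different policies that you assume rather than set up.

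To close the gap, make three things explicit. Couple the two runs so they coincide up to $t_1$, and define escape as reaching some $s'\notin P\cup L$. Then assume that the baseline's visits to $L$ after $t_1$ add no new latent states or ad instances; alternatively, measure reachability as the set of states reachable in principle rather than the set visited. With these, your Steps 1--3 go through. The paper's one-line proof silently relies on the same assumption; your more explicit version is what exposes it.
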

\begin{proof}
Since $\pi_{base}$ remains confined to $\mathcal{S}_{loop}$ with non-zero probability mass, any policy that exits $\mathcal{S}_{loop}$ expands the reachable set.
\end{proof}
\begin{theorem}[Resolution of Structural Loops]
Let $\mathcal{C}_{ad}(\pi)$ denote the set of ad trigger instances
$\alpha = (s,\tau,c)$ covered by policy $\pi$ within budget $T$.
Under the assumption that ad triggers are reachable via semantically distinguishable transitions, MANA ensures broader coverage than structural baselines:
\begin{equation}
    \mathbb{E}[|\mathcal{C}_{ad}(\pi_{\text{MANA}})|] > \mathbb{E}[|\mathcal{C}_{ad}(\pi_{base})|]
\end{equation}
\end{theorem}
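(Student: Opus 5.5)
The argument compares the two policies through the aliasing structure of the appendix: first show that $\pi_{\text{MANA}}$ never loses coverage relative to $\pi_{base}$, then find one event of positive probability where it strictly gains, and conclude by the Lemma (Loop Escape Implies Coverage Expansion). Because the statement is about expectations, a pathwise dominance is not needed; an expected-value decomposition over the event that $\pi_{base}$ enters an absorbing structural loop suffices.

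\textbf{Step 1: partition the trajectory space.} Let $L$ be the event that, within budget $T$, $\pi_{base}$ enters an equivalence class $[s]_{\text{struct}}$ that is absorbing for it. This is the set $\mathcal{S}_{loop}$ of the Lemma, for example the dictionary ``next word'' cycle. We write
\[
\mathbb{E}[|\mathcal{C}_{ad}(\pi)|]=\Pr[L]\,\mathbb{E}[|\mathcal{C}_{ad}(\pi)|\mid L]+\Pr[L^c]\,\mathbb{E}[|\mathcal{C}_{ad}(\pi)|\mid L^c].
\]
We need $\Pr[L]>0$. This follows from aliasing: whenever some class has $|[s]_{\text{struct}}|>1$ with an action mapping the class into itself, a policy measurable only in $M_t$ cannot distinguish the states, so it repeats that action with positive probability at every step.

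\textbf{Step 2: weak dominance off the loop event.} On $L^c$, argue that $\pi_{\text{MANA}}$ does at least as well. The context $Z_t$ strictly refines $M_t$, because $\Phi$ takes $M_t$ as one of its arguments. So any action $\pi_{base}$ could choose is still available to the policy maximizing the implicit criterion $J$. Since $R_{\text{sem}}$ rewards ad-relevant transitions, the expected per-step semantic gain cannot be lower. This step genuinely requires an assumption that the LLM approximately maximizes $J$. I would state it explicitly as a modelling hypothesis rather than try to derive it.

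\textbf{Step 3: strict gain on $L$.} Inside an absorbing class, $\pi_{\text{MANA}}$ differs in two ways.

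1. The exploration penalty $-\lambda N(v_{next})$ grows linearly with revisits, because the UTG nodes, the EMA scores, and the adaptive window $K_t$ expanding when $|\mathcal{U}_t|\le 2$ all register the stagnation.
2. By hypothesis the ad trigger is reachable via a semantically distinguishable transition, so $R_{\text{sem}}$ separates the escaping action from the looping one.

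Hence after at most $k^*\approx \lceil \Delta R_{\text{sem}}/\lambda\rceil$ revisits the escaping action has higher $J$-value, and $\pi_{\text{MANA}}$ exits $\mathcal{S}_{loop}$. Assuming $T>k^*$ plus the depth of the trigger, the Lemma then gives strictly larger reachable coverage on $L$ with positive conditional probability.

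Combining Steps 1–3 yields the strict inequality.

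\textbf{Main obstacle.} The hard part is Step 2/3's reliance on the LLM behaving like a $J$-maximizer: the theorem as stated is only as strong as that idealization. I would first try to make it precise with an $\epsilon$-greedy relaxation, in which the policy picks a $J$-maximal action with probability at least $1-\epsilon$. Under that relaxation the escape happens with probability at least $(1-\epsilon)$ per decision once $k^*$ is passed, which keeps the strict gain on $L$ positive. Weak dominance on $L^c$ then holds up to an $O(\epsilon)$ loss. This loss must be absorbed by the strict gain on $L$, so the proof needs $\epsilon$ small relative to $\Pr[L]$ times the coverage gain.
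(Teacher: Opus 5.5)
Your Steps 1 and 3 reproduce the paper's proof. There too, $\pi_{base}$ sees only the aliased $M$ and so does a random walk inside $[s]_{\text{struct}}$ with a stationary, small exit probability. Meanwhile $\pi_{\text{MANA}}$ leaves $\mathcal{S}_{loop}$ in one of two ways: at once by semantic un-aliasing ($R_{\text{sem}}(a_{exit})>R_{\text{sem}}(a_{loop})$), or, when semantics are ambiguous, because $-\lambda N$ grows monotonically. The Lemma is then invoked. The paper, however, never isolates an event $L$ and never argues dominance away from the loop; it simply asserts that escaping yields ``a superset of state coverage''. Your Step 2 tries to supply exactly that missing piece. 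Making the $J$-maximizer idealization explicit is an improvement over the paper's sketch, but the step as justified does not go through.

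\textbf{Step 2 fails even for an exact maximizer of $J$.} Optimality for $J$ says nothing about $|\mathcal{C}_{ad}|$. $J$ is a surrogate built from a heuristic, possibly noisy $R_{\text{sem}}$ (the paper itself allows this) and a revisit penalty. So ``every action of $\pi_{base}$ is still available'' gives at most $J(\pi_{\text{MANA}})\ge J(\pi_{base})$, not $\mathbb{E}[|\mathcal{C}_{ad}(\pi_{\text{MANA}})|\mid L^c]\ge\mathbb{E}[|\mathcal{C}_{ad}(\pi_{base})|\mid L^c]$. The penalty can actively cost coverage. The paper stresses that some ads surface only after a delay or after repeated visits to the same screen, and a $J$-maximizer is steered away from precisely the revisits that a random $\pi_{base}$ may make.

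\textbf{The conditioning on $L$ is not well defined for MANA.} $L$ is an event of $\pi_{base}$'s run, so conditioning $\pi_{\text{MANA}}$'s coverage on $L$ and $L^c$ requires coupling the two runs. Step 3 tacitly assumes such a coupling, since it reasons about MANA inside the very loop that traps the baseline, but you never construct one. For independent runs the conditioning is vacuous. You would then have to show that $\mathbb{E}[|\mathcal{C}_{ad}(\pi_{\text{MANA}})|]$ is at least $\mathbb{E}[|\mathcal{C}_{ad}(\pi_{base})|\mid L^c]$ and strictly exceeds $\mathbb{E}[|\mathcal{C}_{ad}(\pi_{base})|\mid L]$, which Step 3 does not address.

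Your $\epsilon$-greedy relaxation inherits both problems, because it is also phrased in terms of $J$. The honest repair is to state coverage dominance off the loop as a hypothesis, rather than derive it from $J$-maximality. It should be stated pathwise, under an explicit coupling; this is what the paper's ``superset'' claim tacitly assumes. With that hypothesis, your Steps 1 and 3 do yield the strict inequality.
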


\begin{proof}
Since each ad trigger requires reaching a specific latent state under a compatible interaction context, increased escape from structural loops strictly enlarges the reachable ad-trigger space.
Consider a scenario where the agent is trapped in a structural loop $\mathcal{S}_{loop} \subset [s]_{struct}$ (e.g., infinite scrolling).

\textbf{1. Baseline Stagnation:} Since $\pi_{base}$ relies on invariant structure $M$, the probability of selecting the exit action $a_{exit}$ remains stationary and low. The agent performs a random walk within the equivalence class.

\textbf{2. MANA Escape Mechanism:} MANA utilizes $Z_t$ to break the loop via two complementary mechanisms:
\begin{itemize}
    \item \textbf{Semantic Un-aliasing:} The LLM discerns semantic differences in $Z_t$ even when $M$ is identical (e.g., "Home" vs. "Next"). It assigns a higher prior to the exit action: $R_{\text{sem}}(a_{exit}) > R_{\text{sem}}(a_{loop})$, enabling an immediate semantic breakout.
    \item \textbf{Probabilistic Escape:} If semantic cues are ambiguous, the visitation count $N(\mathcal{S}_{loop})$ increases monotonically. The penalty term $-\lambda N$ grows, progressively reducing the utility of the loop. Mathematically, there exists a time $t$ where $\mathbb{E}[R_{exit} - \lambda N_{low}] > \mathbb{E}[R_{loop} - \lambda N_{high}]$, forcing a policy shift.
\end{itemize}
Thus, MANA escapes local optima that trap baselines, ensuring a superset of state coverage.
Notably, the escape mechanism does not rely on perfect semantic inference:
even when $R_{\text{sem}}$ is noisy, the monotonic growth of $N(\cdot)$
ensures eventual exploration pressure.
\end{proof}